\definecolor{modra}{rgb}{0,0,.8}
\definecolor{piros}{rgb}{.8,0,0}
\newcommand*\patchAmsMathEnvironmentForLineno[1]{%
  \expandafter\let\csname old#1\expandafter\endcsname\csname #1\endcsname
  \expandafter\let\csname oldend#1\expandafter\endcsname\csname end#1\endcsname
  \renewenvironment{#1}%
     {\linenomath\csname old#1\endcsname}%
     {\csname oldend#1\endcsname\endlinenomath}}%
\newcommand*\patchBothAmsMathEnvironmentsForLineno[1]{%
  \patchAmsMathEnvironmentForLineno{#1}%
  \patchAmsMathEnvironmentForLineno{#1*}}%
\definecolor{modra3}{rgb}{.1,.0,.4}
      \theoremstyle{plain}
      \newtheorem{theorem}{Theorem} 
      \newtheorem{lemma}[theorem]{Lemma}
      \newtheorem{observation}[theorem]{Observation}
\newtheorem{conjecture_}{Conjecture}
      \theoremstyle{definition}
      \theoremstyle{remark}
\def\marrow{{\marginpar[\hfill$\Rrightarrow$]{$\Lleftarrow$}}}
\def\jk#1{\color{modra} {\sc JK: }{\marrow\sf #1} \normalcolor}
\def\cNP{\hbox{\rm \sffamily NP}}
\def\inst#1{$^{#1}$}
\begin{document}

\title{Clustered planarity testing revisited}

\author{
Radoslav Fulek\inst{1}$^{,}$\inst{2}\thanks{The author gratefully acknowledges support from the Swiss National Science Foundation Grant No.  200021-125287/1 and PBELP2\_146705, and ESF Eurogiga project GraDR as GA\v{C}R GIG/11/E023.}
  \and
Jan Kyn\v{c}l\inst{1}$^{,}$\inst{3}\thanks{Supported by Swiss National Science Foundation Grants 200021-137574 and 200020-14453, by the ESF Eurogiga project GraDR as GA\v{C}R GIG/11/E023, by the grant no. 14-14179S of the Czech Science Foundation (GA\v{C}R) and by the grant SVV-2013-267313 (Discrete Models and Algorithms).}
  \and
Igor Malinovi\'c\inst{4}
  \and
D\"om\"ot\"or P\'alv\"olgyi\inst{5}\thanks{Supported by Hungarian National Science Fund (OTKA), under grant PD 104386 and under grant NN 102029 (EUROGIGA project GraDR 10-EuroGIGA-OP-003) and the J\'anos Bolyai Research Scholarship of the Hungarian Academy of Sciences.} }

\date{}

\maketitle

\begin{center}
{\footnotesize
\inst{1}
Department of Applied Mathematics and Institute for Theoretical Computer Science, \\
Charles University, Faculty of Mathematics and Physics, \\
Malostransk\'e n\'am.~25, 118 00~ Prague, Czech Republic; \\
\texttt{kyncl@kam.mff.cuni.cz, radoslav.fulek@gmail.com}
\\\ \\
\inst{2}
 Department of Industrial Engineering and Operations Research,
 Columbia University, \\
 New York City, NY, USA
\\\ \\
\inst{3}
\'Ecole Polytechnique F\'ed\'erale de Lausanne, Chair of Combinatorial Geometry, \\
EPFL-SB-MATHGEOM-DCG, Station 8, CH-1015 Lausanne, Switzerland
\\\ \\
\inst{4}
\'Ecole Polytechnique F\'ed\'erale de Lausanne, Chair of Discrete Optimization, \\
EPFL-SB-MATHAA-DISOPT, Station 8, CH-1015 Lausanne, Switzerland; \\
\texttt{igor.malinovic@epfl.ch}
\\\ \\
\inst{5}
Institute of Mathematics, \\
 E\"otv\"os University,
  P\'azm\'any P\'eter s\'et\'any 1/C,
 H-1117 Budapest, Hungary; \\
\texttt{domotorp@gmail.com}

}
\end{center}


\begin{abstract}
The Hanani--Tutte theorem is a classical result proved for the first time in the 1930s that characterizes planar graphs
as graphs that admit a drawing in the plane in which every pair of edges not sharing a vertex cross an
even number of times.
We generalize this result to clustered graphs with two disjoint clusters, and show
that a straightforward extension to flat clustered graphs with three or more disjoint clusters is not possible.
For general clustered graphs we show a variant of the  Hanani--Tutte theorem in the case when
each cluster induces a connected subgraph.

Di Battista and Frati proved that clustered planarity of embedded clustered graphs whose every face is incident to at most five vertices can be tested in polynomial time. We give a new and short proof of this result, using the matroid intersection algorithm.
\end{abstract}

\section{Introduction}
\label{section_intro}

Investigation of graph planarity can be traced back to the 1930s and developments accomplished at that time by
 Hanani~\cite{Ha34_uber}, Kuratowski~\cite{Kuratowski30_sur}, Whitney~\cite{Whitney32_nonseparable} and others.
 Forty years later, with the advent of computing, a linear-time algorithm for graph planarity was discovered~\cite{HoTa74_planarity}.
Nowadays, a polynomial-time algorithm for testing whether a graph admits a crossing-free drawing in the plane
could almost be considered a folklore result.

Nevertheless,  many variants of planarity are still only poorly understood. As a consequence of this state of affairs,
the corresponding decision problem for these variants has neither been shown to be polynomial nor \cNP-hard.
\emph{Clustered
planarity} is one of the most prominent~\cite{CoDiBa05_clustered} of such planarity notions. Roughly speaking, an instance of this problem is a graph whose vertices are partitioned into clusters.
The question is whether the graph can be drawn in the plane so that the vertices in the same cluster belong to the same simple closed region and no edge crosses the boundary of a particular region more than once.
The aim of the present work is to offer novel perspectives on clustered planarity, which seem to be worth pursuing in order to improve our understanding of the problem.

More precisely, a \emph{clustered graph} is a pair $(G,T)$ where $G=(V,E)$ is a graph and $T$ is a rooted tree whose set of leaves is the set of vertices of $G$.
The non-leaf vertices of $T$ represent the clusters. Let $C(T)$ be the set of non-leaf vertices of $T$.
For each $\nu\in C(T)$, let $T_{\nu}$ denote the subtree of $T$ rooted at $\nu$. The {\em cluster\/} $V(\nu)$ is the set of leaves of $T_{\nu}$. 
A clustered graph $(G,T)$ is \emph{flat} if all non-root clusters are children of the root cluster; that is, if every root-leaf path in $T$ has at most three vertices. When discussing flat clustered graphs, which is basically everywhere except Sections~\ref{section_intro},~\ref{section_alg} and~\ref{section_c_connected}, by ``cluster'' we will refer only to the non-root clusters.

A \emph{drawing} of $G$ is a representation of $G$ in the plane where every vertex is represented by a unique point and every
edge $e=uv$ is represented by a simple arc joining the two points that represent $u$ and $v$. If it leads to no confusion, we do not distinguish between
a vertex or an edge and its representation in the drawing and we use the words ``vertex'' and ``edge'' in both contexts. We assume that in a drawing no edge passes through a vertex,
no two edges touch and every pair of edges cross in finitely many points.
We assume that the above properties of a drawing of $G$ are maintained during any continuous deformation of the drawing of $G$
except for intermediate  one-time events when two edges touch in a single point or an edge passes through a vertex.

 A drawing of a graph is an \emph{embedding} if no two edges cross.

A clustered graph $(G,T)$ is \emph{clustered planar} (or briefly \emph{c-planar}) if $G$ has an embedding in the plane such that

\begin{enumerate}[(i)]
\item for every $\nu\in C(T)$, there is a topological disc $\Delta(\nu)$ containing all the leaves of $T_{\nu}$ and no other vertices of $G$,

\item if $\mu\in T_{\nu}$, then $\Delta(\mu) \subseteq \Delta(\nu)$,

\item if $\mu_1$ and $\mu_2$ are children of $\nu$ in $T$, then $\Delta(\mu_1)$ and $\Delta(\mu_2)$ are internally disjoint, and

\item for every $\nu\in C(T)$, every edge of $G$ intersects the boundary of the disc $\Delta(\nu)$ at most once.
\end{enumerate}
A \emph{clustered drawing (or embedding)} of a clustered graph $(G,T)$ is a drawing (or embedding, respectively) of $G$ satisfying (i)--(iv).
See Figures~\ref{fig_1_1_example_definition_clustered} and~\ref{fig_1_2_example_non_c_planar} for an illustration.
We will be using the word ``cluster'' for both the topological disc $\Delta(\nu)$ and the subset of vertices $V(\nu)$.

\begin{figure}
\begin{center}
 \ifpdf \includegraphics[scale=1]{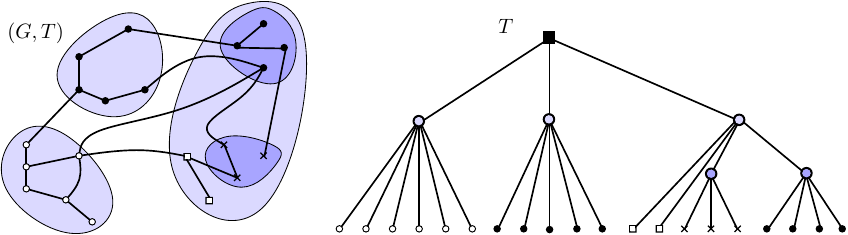} \fi
\end{center}
 \caption{A clustered embedding of a clustered graph $(G,T)$ and its tree $T$.}
 \label{fig_1_1_example_definition_clustered}
\end{figure}

\begin{figure}
\begin{center}
 \ifpdf \includegraphics[scale=1]{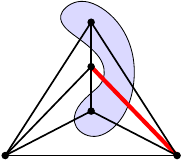} \fi
\end{center}
 \caption{A clustered graph with one non-root cluster, which is not c-planar.}
 \label{fig_1_2_example_non_c_planar}
\end{figure}

\paragraph{A brief history of clustered planarity.}
The notion of clustered planarity was introduced by Feng, Cohen and Eades~\cite{FCE95_how,FCE95b_planarity} under the name {\em c-planarity}. A similar problem, {\em hierarchical planarity}, was considered already by Lengauer~\cite{Lengauer89_hierarchical}.
Since then an efficient algorithm
for c-planarity testing or embedding has been discovered only in some special cases. The general problem whether the c-planarity of a clustered graph $(G,T)$ can be tested in polynomial time is wide open, already when we restrict ourselves to three pairwise disjoint clusters and the case when the combinatorial embedding of $G$ is a part of the input!

A clustered graph $(G,T)$  is \emph{c-connected} if every cluster of $(G,T)$ induces a connected subgraph. See Figure~\ref{fig_5_1_c-connected1}.
In order to test a c-connected clustered graph $(G,T)$ for c-planarity, it is enough to test whether there exists an embedding of $G$ such that for every $\nu\in C(T)$, all vertices of $V(G) \setminus V(\nu)$ are drawn in the outer face of the subgraph induced by $V(\nu)$~\cite[Theorem 1]{FCE95b_planarity}.
Cortese et al.~\cite{CDiBFPP08_c_connected} gave a structural characterization of c-planarity for c-connected clustered graphs and provided a linear-time algorithm.
Gutwenger et al.~\cite{GJLMPW02_advances} constructed a polynomial algorithm for a more general case of \emph{almost connected} clustered graphs, which can be also used
for the case of flat clustered graphs with two clusters forming a partition of the vertex set. Biedl~\cite{Biedl98_III} gave the first polynomial-time algorithm for c-planarity with two clusters, including the case of straight-line or $y$-monotone drawings. An alternative approach to the case of two clusters was given by Hong and Nagamochi~\cite{Hong09_twopage}.
  On the other hand, only very little is known in the case of three clusters, where the only clustered graphs for which a polynomial algorithm for c-planarity is known are clustered cycles~\cite{CDibPP05_cycles}.

\paragraph{Hanani--Tutte theorem.}
The Hanani--Tutte theorem~\cite{Ha34_uber,Tutte70_toward} is a classical result that provides an algebraic characterization of planarity with interesting theoretical and algorithmic consequences; see Section~\ref{section_alg}.
The (strong) Hanani--Tutte theorem says that a graph is planar if it can be drawn in the plane so that no pair of independent edges crosses an odd number of times.
Moreover, its variant known as the weak Hanani--Tutte theorem~\cite{CN00_thrackles,PT00_which,PSS06_removing} states that if $G$ has a drawing $\mathcal{D}$ where every pair of edges cross an even number of times, then $G$ has an embedding that preserves the cyclic order of edges at vertices in $\mathcal{D}$.
Note that the weak variant does not directly follow from the strong Hanani--Tutte theorem.
For sub-cubic graphs, the weak variant implies the strong variant.
Other variants of the Hanani--Tutte theorem were proved for surfaces of higher genus~\cite{PSS09_projective,PSS09b_surfaces},  $x$-monotone drawings~\cite{FPSS12_monotone,PT04_monotone},
partially embedded planar graphs, and several special cases of simultaneously embedded planar graphs~\cite{Sch13b_towards}.
See~\cite{Sch13_hananitutte} for a recent survey on applications of the Hanani--Tutte theorem and related results.

We prove a variant of the (strong) Hanani--Tutte theorem for flat clustered graphs with two
clusters forming a partition of the vertex set. Similarly to other variants of the Hanani--Tutte theorem, as a byproduct of our result, we immediately obtain a polynomial-time algorithm for testing c-planarity in this special case. The algorithm essentially consists of solving a linear system of equations over $\mathbb{Z}_2$.
The running time of the algorithm is in $O(|V(G)|^{2\omega})$,  where $O(n^{\omega})$ is the complexity of multiplication of two square $n\times n$ matrices; see Section~\ref{section_alg}.
The best current algorithms for matrix multiplication give $\omega<2.3729$~\cite{LeGall14_powers,Williams12_faster}.
Since our linear system is sparse, it is also possible to use Wiedemann's randomized algorithm~\cite{Wie86_sparse}, with expected running time $O(n^4\log{n}^2)$ in our case.

Although the worst-case running time of our algorithm is not competitive, we believe this does not make our results less interesting, since the purpose of our direction of research lies more in theoretical foundations than in its immediate consequences. Moreover, the worst-case running time analysis often gives an unfair perspective
on the performance of algebraic algorithms, such as the simplex method.

We remark that there exist more efficient algorithms for planarity testing using the Hanani--Tutte theorem such as those in~\cite{FMR06_tremaux,dFR85_char_tremaux}, which run in linear time;
 see also \cite[Section 1.4.1]{Sch13_hananitutte}. Moreover, in the case of $x$-monotone drawings a computational study~\cite{CZ14_upward} showed that the Hanani--Tutte approach~\cite{FPSS12_monotone} performs really well in practice. This should come as no surprise, since Hanani--Tutte theory seems to provide solid theoretical foundations for graph planarity
that bring together its combinatorial, algebraic, and computational aspects~\cite{Sch13b_towards}.

\paragraph{Notation.}
In this paper we assume that $G=(V,E)$ is a graph, and we state all our theorems for graphs. However, in some of our proofs we also use multigraphs, that is, generalized graphs that can have multiple edges and multiple loops. Most of the notions defined for graphs extend naturally to multigraphs, and thus we use them without generalizing them explicitly.
We use a shorthand notation $G-v$ for $(V\setminus \{v\},E\setminus \{vw| \  vw \in E\})$, and $G\cup E'$ for $(V, E\cup E')$.
The \emph{rotation} at a vertex $v$ is the clockwise cyclic order of the end pieces of edges incident to $v$. The \emph{rotation system} of a graph is the set of rotations at all its vertices.
We say that two embeddings of a graph are the \emph{same} if they have the same rotation system up to switching the orientations of all the rotations simultaneously.
We say that two edges in a graph are \emph{independent} if they do not share a vertex.
An edge in a drawing is {\em even\/} if it crosses every other edge an even number of times.
A drawing of a graph is \emph{even} if all edges are even.
A drawing of a graph is \emph{independently even} if every pair of independent edges in the drawing cross an even number of times.

\paragraph{Hanani--Tutte theorem for clustered graphs.}
A clustered graph $(G,T)$ is {\em two-clustered\/} if the root of $T$ has exactly two children, $A$ and $B$, and every vertex of $G$ is a child of either $A$ or $B$ in $T$. In other words, $A$ and $B$ are the only non-root clusters and they form a partition of the vertex set of $G$. Obviously, two-clustered graphs form a subclass of flat clustered graphs.
We extend both the weak and the strong variant of the Hanani--Tutte theorem to two-clustered graphs.

\begin{theorem}
\label{thm:WeakClusterHT}
If a two-clustered graph $(G,T)$ admits an even clustered drawing $\mathcal{D}$ in the plane then $(G,T)$ is c-planar.
Moreover, $(G,T)$ has a clustered embedding with the same rotation system as $\mathcal{D}$.
\end{theorem}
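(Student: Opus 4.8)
The plan is to reduce the statement to the ordinary weak Hanani--Tutte theorem for the plane, by encoding the two-cluster structure as an auxiliary cycle.

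The first step is the following reformulation, specific to the two-cluster case: a two-clustered graph $(G,T)$ with clusters $A$ and $B$ admits a clustered embedding (clustered drawing) if and only if $G$ admits an embedding (drawing) together with a simple closed curve $\gamma$ such that all vertices of $A$ lie strictly inside $\gamma$, all vertices of $B$ strictly outside, and every edge of $G$ crosses $\gamma$ at most once. One direction takes $\gamma=\partial\Delta(A)$ and notes that every intra-cluster edge then crosses $\gamma$ an even number of times, hence not at all, while every $A$--$B$ edge crosses it exactly once; the other direction puts $\Delta(A)$ inside $\gamma$ and $\Delta(B)$ outside (working on the sphere). So it suffices, starting from the even clustered drawing $\mathcal D$ with separating curve $\gamma$, to produce an embedding of $G$ with the same rotation system that still admits such a curve. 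We may assume $G$ is connected and has at least two $A$--$B$ edges; for disconnected $G$ one applies the argument to each component and then merges the separating curves through vertex-free regions, and the cases with few $A$--$B$ edges are easy.

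The construction: subdivide every $A$--$B$ edge $e$ by a new vertex $x_e$ placed at the unique point of $e\cap\gamma$, and add a new cycle $C$ through the vertices $x_e$, routed along $\gamma$ in their cyclic order on $\gamma$. Let $G''$ and $\mathcal D''$ be the resulting graph and drawing. In $\mathcal D''$, the edges of $C$ and all intra-cluster edges meet $\gamma$ only at the $x_e$, and at each $x_e$ the clockwise rotation reads $(c_e^-,\, h_e^{A},\, c_e^{+},\, h_e^{B})$, where $c_e^\pm$ are the two $C$-edges and $h_e^A,h_e^B$ are the inside-$\gamma$ and outside-$\gamma$ halves of $e$; moreover this pattern is coherent as one traverses $C$. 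Assuming for the moment that $\mathcal D''$ is an even drawing, the weak Hanani--Tutte theorem applied to $G''$ produces an embedding $\mathcal E''$ with the same rotation system. There $C$ becomes a simple closed curve $\gamma'$ splitting the sphere into discs $D_1$ and $D_2$; by the coherent rotations at the $x_e$, all halves $h_e^A$ leave $x_e$ into $D_1$ and all halves $h_e^B$ into $D_2$. Since $\mathcal E''$ is crossing-free, an $A$-vertex incident to some $A$--$B$ edge lies in $D_1$ and a $B$-vertex incident to some $A$--$B$ edge lies in $D_2$; propagating along intra-cluster paths (which are disjoint from $\gamma'$) places every vertex of a mixed component on the right side, and monochromatic components can be relocated into a face on the appropriate side without changing any rotation. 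Undoing the subdivisions then yields an embedding of $G$ with the rotation system of $\mathcal D$ in which $\gamma'$ crosses each $A$--$B$ edge once and no other edge, i.e., a clustered embedding by the reformulation.

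The main obstacle is precisely the assumption that $\mathcal D''$ is even: two $A$--$B$ edges cross an even number of times in total, but these crossings can be split oddly by $\gamma$, so their two inside-$\gamma$ halves (and their two outside-$\gamma$ halves) can cross an odd number of times; and the clustered constraint ``at most one crossing with $\gamma$'' blocks the naive move of pushing such a crossing across $\gamma$. To repair this I would first redraw $\mathcal D$, keeping it an even clustered drawing with the same rotation system, so that any two $A$--$B$ edges cross evenly on each side of $\gamma$. I would try to do this by a Hanani--Tutte-type redrawing of $G[A]$ together with the inside-$\gamma$ portions of the $A$--$B$ edges, viewed as a graph drawn in a closed disc with its pendant ends prescribed on the boundary circle (and symmetrically for $B$), using the freedom to change $\gamma$ and the cyclic order of the $x_e$ and checking that the two sides can be glued back consistently. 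An alternative route is to bypass $\mathcal D''$ and instead adapt a direct proof of the weak Hanani--Tutte theorem---redrawing edges one at a time over a fixed crossing-free spanning forest chosen disjoint from $\gamma$---so that it carries the curve $\gamma$ along. I expect this evenness/redrawing step to be by far the most delicate part of the proof; the rest is bookkeeping.
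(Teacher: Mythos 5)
Your reduction has a genuine gap, and you have correctly located it yourself: the auxiliary drawing $\mathcal D''$ obtained by subdividing the $A$--$B$ edges on $\gamma$ and inserting the cycle $C$ need not be even, because two inter-cluster edges that cross evenly in $\mathcal D$ may distribute their crossings oddly between the inside and the outside of $\gamma$. Everything after the sentence ``Assuming for the moment that $\mathcal D''$ is an even drawing'' is fine bookkeeping, but the repair you then sketch --- an evenness-restoring redrawing of $G[A]$ together with the inner halves of the $A$--$B$ edges inside a closed disc with pendant ends on its boundary, with the freedom to permute the attachment points and the obligation to glue the two sides back consistently --- is not carried out, and it is not a routine step: it is essentially a Hanani--Tutte statement for drawings constrained by a curve (a relative/strip-type variant), which is at least as hard as the theorem you are proving. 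Carrying the separating curve $\gamma$ through the redrawing is precisely the difficulty that distinguishes clustered planarity from ordinary planarity, so as written the proposal assumes the crux rather than proving it.

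The paper's proof avoids ever having to control crossings relative to $\gamma$. It first contracts, one by one, every edge with both endpoints in the same cluster (keeping the resulting loops and multiple edges); each contraction preserves evenness, the clustered property, and the rotations, and can be undone at the end because the rotation system is preserved throughout. The base case is a bipartite multigraph between $A$ and $B$ plus loops. For the bipartite part one applies the ordinary weak Hanani--Tutte theorem with no auxiliary structure, and only \emph{afterwards} constructs the two discs from the resulting embedding: since the dual graph is connected, one can join the $A$-vertices by a spanning tree routed through the faces, and a small neighbourhood of that tree is a disc whose boundary meets every edge at most once (Lemma~\ref{lemma:WeakBipartiteClusterHT}). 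The loops are then re-inserted using the preserved rotations (their ``good'' intervals are nested or disjoint because the drawing is even). In short: instead of forcing the curve to survive the Hanani--Tutte redrawing, the paper makes the graph bipartite so that the curve can be rebuilt for free from any embedding with the right rotation system. If you want to salvage your route, you would need to prove the disc-with-boundary evenness lemma you allude to; note that a generalization in that direction (strip planarity) is the subject of a separate paper~\cite{Fu14+_towards}, which is an indication of how much work that missing step contains.
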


Theorem~\ref{thm:WeakClusterHT} has been recently generalized by the first author to the case of strip planarity~\cite{Fu14+_towards}.

\begin{theorem}
\label{thm:StrongClusterHT}
If a two-clustered graph $(G,T)$ admits an independently even clustered drawing in the plane then $(G,T)$ is c-planar.
\end{theorem}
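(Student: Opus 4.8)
\emph{Proof proposal.}
The plan is to derive Theorem~\ref{thm:StrongClusterHT} from the weak version, Theorem~\ref{thm:WeakClusterHT}, following the classical route that deduces the strong Hanani--Tutte theorem from its weak form~\cite{PSS06_removing}. In an independently even drawing the only edges that fail to be even are those belonging to an \emph{adjacent} pair that crosses an odd number of times, so it suffices to redraw the given independently even clustered drawing $\mathcal{D}$ into an \emph{even} clustered drawing of $(G,T)$, and then quote Theorem~\ref{thm:WeakClusterHT}. Note that Theorem~\ref{thm:StrongClusterHT} does not ask to preserve the rotation system, so we are free to change rotations along the way; this is exactly what lets the even-ization step work.

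First I would recall the standard single-vertex redrawing lemma from the theory of even drawings (see~\cite{PSS06_removing}): in an independently even drawing, for any single vertex $v$ one can redraw the edges incident to $v$ so that every one of them becomes even, while the parity of the number of crossings of each pair of edges not incident to $v$ is unchanged (only the rotation at $v$ may change). Applying this to the vertices $v_1,\dots,v_n$ of $G$ in turn yields an even drawing: after processing $v_i$ the edges at $v_i$ are even, and for every $j<i$ the edges at $v_j$ remain even, because the $i$-th step alters only crossing parities of pairs meeting $v_i$, and those edges are now even. The extra content in the clustered setting is that every one of these redrawings can be performed so that the drawing stays a clustered drawing of $(G,T)$: a redrawn intra-cluster edge must remain inside the disc of its cluster, and a redrawn inter-cluster edge $uv$ with $u\in A$, $v\in B$ must still cross $\partial\Delta(A)$ and $\partial\Delta(B)$ exactly once each. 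A convenient way to arrange this is to fix at the outset thin, pairwise disjoint collar neighbourhoods of $\partial\Delta(A)$ and $\partial\Delta(B)$ meeting no vertex, to isotope $\mathcal{D}$ so that every edge meets each boundary as a single transversal segment inside the corresponding collar, and then to carry out every elementary move of the star redrawing (detours of an edge around a vertex or around a cycle of the drawing, transpositions in the rotation at $v$) in the complement of the two collars, so that the parity of intersections with each boundary is left untouched.

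The main obstacle, I expect, is precisely this cluster-respecting version of the star-redrawing lemma. The difficulty is that the elementary detour used to flip the parity of a crossing of an edge $e$ is a detour of $e$ around some vertex $x$ of the drawing, which forces $e$ into a neighbourhood of $x$; if $x$ lies in the cluster \emph{other} than the one that should contain $e$ (say $e$ is an intra-$A$ edge but $x\in B$), such a detour would drag $e$ across the cluster boundaries and destroy clusteredness. Overcoming this amounts to showing that the required correction of the star of $v$ can always be realised using only detours around vertices of the appropriate cluster (and detours around cycles that can likewise be chosen inside the relevant disc) --- a linear-algebraic statement over $\mathbb{Z}_2$ about the space of admissible redrawings of $\mathrm{star}(v)$; alternatively one may first apply the usual combinatorial simplifications, treating connected components and blocks separately and suppressing vertices of degree at most two, so that the remaining local configurations become manageable. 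Once the even clustered drawing has been produced, Theorem~\ref{thm:WeakClusterHT} completes the proof.
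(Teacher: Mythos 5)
Your proposal rests entirely on the claim that an independently even clustered drawing can be transformed into an \emph{even} clustered drawing, after which Theorem~\ref{thm:WeakClusterHT} would finish the job. That claim is the whole difficulty, and the ``standard single-vertex redrawing lemma'' you invoke is not a lemma of~\cite{PSS06_removing} and is false as stated once a vertex has degree at least $4$. Redrawing the star of $v$ inside a small disc around $v$ (the exterior of the disc, and hence the crossings of each star edge with the edges at its \emph{other} endpoint, being untouched) changes the pairwise crossing parities of the $d$ arcs only within a proper subset of $\mathbb{Z}_2^{\binom{d}{2}}$: every elementary move either transposes two consecutive ends in the rotation at $v$ (flipping exactly that one pair) or pulls an end of one arc all the way around $v$ (flipping that arc against all the others). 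A short invariant computation for $d=4$ --- for each $4$-cycle $c$ of $K_4$ on the four edge-ends, the functional $\ell_c(p)=\sum_{\{i,j\}\in c}p_{ij}$ plus a suitable $\{0,1\}$-valued function of the rotation is preserved by every move --- shows that flipping only the ``diagonal'' pair $\{e_1,e_3\}$ in the rotation $(e_1,e_2,e_3,e_4)$ is unreachable, no matter which final rotation you accept. This is exactly why the weak Hanani--Tutte theorem is known to imply the strong one only for sub-cubic graphs (as the introduction remarks), so you cannot even-ize vertex by vertex. Allowing global moves does not rescue the plan: an edge-vertex switch $(e_i,u_j)$ does flip the adjacent pair $(e_i,e_j)$, but it simultaneously flips $e_i$ against every other edge at $u_j$, damaging independent pairs, and proving that the resulting $\mathbb{Z}_2$-system (further restricted by the clustering constraints) is always solvable is not easier than the theorem itself. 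You correctly flag this as ``the main obstacle,'' but it is left entirely unproved, and it is the theorem.

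The paper's proof takes a genuinely different route and never attempts to make the drawing even. It applies the classical strong Hanani--Tutte theorem to give each component of $G[A]\cup G[B]$ an \emph{exposed} embedding (Lemma~\ref{lemma:single_face}), deletes everything not on the outer boundary so that $G'[A]$ and $G'[B]$ become cactus forests, splits vertices and fills each cycle with a wheel to rigidify it, applies the strong Hanani--Tutte theorem to the modified graph $G''$, rearranges the resulting embedding by flips, reinserts the deleted pieces, and only then uses Lemma~\ref{lemma:WeakBipartiteClusterHT} (after contracting each component of $G[A]\cup G[B]$) to draw the two cluster discs. To salvage your approach you would have to prove the even-ization claim for two-clustered graphs, which is essentially equivalent to the statement you are trying to prove.
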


We also prove a strong Hanani--Tutte theorem for c-connected clustered graphs.

\begin{theorem}
\label{thm:StrongClusterHTC}
If a c-connected clustered graph $(G,T)$ admits an independently even clustered drawing in the plane then $(G,T)$ is c-planar.
\end{theorem}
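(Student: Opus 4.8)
The plan is to argue by induction on the number $|C(T)|$ of clusters. If $|C(T)|=1$, the only cluster is the root, a clustered embedding is simply a planar embedding, and the statement is the ordinary strong Hanani--Tutte theorem. For the inductive step, pick a cluster $\nu\in C(T)$ that is \emph{innermost}, i.e.\ all children of $\nu$ in $T$ are leaves; since $(G,T)$ is c-connected, $H:=G[V(\nu)]$ is connected. Contract $H$ to a single vertex $v$, obtaining a clustered graph $(G',T')$ in which $T'$ is $T$ with the star $T_\nu$ replaced by the leaf $v$. One checks that $(G',T')$ is again c-connected and has $|C(T)|-1$ clusters. The idea is to transport the given drawing to $(G',T')$, apply the inductive hypothesis, and then blow $v$ back up into an embedded copy of $H$.

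To transport the drawing, let $\mathcal{D}$ be the given independently even clustered drawing and recall that, by~(iv), in any clustered drawing an edge disjoint from $V(\nu)$ cannot enter $\Delta(\nu)$ (else it meets $\partial\Delta(\nu)$ at least twice), an edge of $H$ stays inside $\Delta(\nu)$, and an edge with exactly one endpoint in $V(\nu)$ meets $\partial\Delta(\nu)$ exactly once. Contract the closed disc $\overline{\Delta(\nu)}$ continuously to a point, dragging the part of the drawing inside it along: the edges of $H$ disappear, each edge leaving $V(\nu)$ turns into an arc issuing from $v$, and every other edge is untouched. This changes crossings only inside $\Delta(\nu)$, where by the first observation no pair of edges that is independent in $G'$ can cross; hence the new drawing is still independently even, and it is readily seen to be a clustered drawing of $(G',T')$ --- the disc of every proper ancestor of $\nu$ still contains $v$ and is still crossed at most once by each edge, and all the other discs are unchanged. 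By the inductive hypothesis $(G',T')$ has a clustered embedding $\mathcal{E}'$.

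To blow $v$ back up, let $U$ be the multiset of endpoints in $V(\nu)$ of the edges of $G$ leaving $V(\nu)$; adjoin a new vertex $z$ joined to $U$ and consider the two-clustered graph $(H+z,\{V(H),\{z\}\})$, which is c-connected. Truncating at $\partial\Delta(\nu)$ the edges of $\mathcal{D}$ that leave $\Delta(\nu)$ and rerouting them to $z$, which we place just outside $\Delta(\nu)$, produces an independently even clustered drawing of it, so by Theorem~\ref{thm:StrongClusterHT} the graph $H$ has a planar embedding with all vertices of $U$ incident to a common face $F$. Taking $\Delta(\nu)$ in $\mathcal{E}'$ to be a tiny disc around $v$ meeting nothing else, deleting $v$, inserting this embedded $H$ with $F$ facing $\partial\Delta(\nu)$, and continuing each truncated external edge from its endpoint in $U$ out through $\partial\Delta(\nu)$ along the rest of its arc in $\mathcal{E}'$, one obtains a drawing of $G$ for which properties~(i)--(iv) can be checked directly --- \emph{provided} the vertices of $U$ appear on $F$ in exactly the cyclic order in which the corresponding edges leave $v$ in $\mathcal{E}'$; otherwise those edges cannot be reattached inside the tiny disc without crossings. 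The difficulty is precisely this matching: the cyclic order of the external edges around $\partial\Delta(\nu)$ recorded by $\mathcal{D}$ has to survive both the inductive call on $(G',T')$ and the application of Theorem~\ref{thm:StrongClusterHT} to $(H+z,\dots)$, whereas a priori neither the strong Hanani--Tutte theorem nor Theorem~\ref{thm:StrongClusterHT} controls rotations at all. I would handle this by strengthening the statement proved by induction so that the clustered embedding also preserves the rotation at one prescribed vertex (namely $v$ in the inductive step, and the single-vertex cluster $z$ when Theorem~\ref{thm:StrongClusterHT} is invoked); the base case then becomes a rotation-preserving refinement of the strong Hanani--Tutte theorem, whose proof --- together with the extra bookkeeping needed when a contracted cluster happens to contain an already prescribed vertex --- I expect to be the technical heart of the argument.
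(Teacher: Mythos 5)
Your overall strategy---induction on the number of clusters, handling an innermost cluster, and invoking the two-cluster result---is in the spirit of the paper's proof, but the way you execute the inductive step is genuinely different and leaves a gap that you yourself flag but do not close. You contract the innermost cluster $H=G[V(\nu)]$ to a single vertex $v$, embed the quotient $(G',T')$ by induction, separately produce an exposed embedding of $H$ via Theorem~\ref{thm:StrongClusterHT}, and then must reattach the external edges so that the cyclic order of the attachment vertices on the outer face of the embedded $H$ agrees with the rotation at $v$ in $\mathcal{E}'$. This coordination is the whole difficulty, and the fix you propose---a strengthening of the inductive statement that preserves the rotation at one prescribed vertex---is not viable as stated: the rotation at a vertex in an independently even drawing need not be realizable by \emph{any} planar embedding. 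For a concrete obstruction, take a fan, i.e.\ a vertex $v$ joined to all vertices of a path $a_1a_2\cdots a_k$; permuting the edge ends of $va_1,\dots,va_k$ in a small disc around $v$ creates crossings only between mutually adjacent edges, so the drawing stays independently even, yet every planar embedding of the fan forces the rotation $a_1,\dots,a_k$ at $v$ up to reflection. So the quantity you would like to carry through the induction (the cyclic order in which the external edges cross $\partial\Delta(\nu)$ in $\mathcal{D}$, which becomes the rotation at $v$) may simply be wrong, and neither the strong Hanani--Tutte theorem nor Theorem~\ref{thm:StrongClusterHT} will reproduce it.

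The paper avoids this entirely by never contracting the cluster to a point. Instead it only modifies the tree (deleting $\mu$ and reattaching its children to its parent) and replaces $G[V(\mu)]$ inside the graph by its exposed outer skeleton---a cactus, obtained via Lemma~\ref{lemma:single_face}---with wheels filling its cycles, exactly as in the proof of Theorem~\ref{thm:StrongClusterHT}. The induction hypothesis is then applied to this whole modified graph with one fewer cluster, so a \emph{single} application of the inductive embedding coordinates the inside of the cluster with the outside automatically; the wheels guarantee that the skeleton's cycles bound empty faces into which the removed parts of $G[V(\mu)]$ can be reinserted, and connectedness of $G[V(\mu)]$ makes drawing $\Delta(\mu)$ at the end trivial. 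If you want to rescue your version, the natural repair is precisely this: keep a connected, outerplanar-type skeleton of $H$ (rather than a single vertex $v$) in the graph passed to the inductive call, which is what the paper does. The first two steps of your argument (transporting the drawing to the quotient and verifying it stays independently even and c-connected) are fine, but as written the proof is incomplete at its acknowledged ``technical heart.''
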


On the other hand, we exhibit examples of clustered graphs with more than two disjoint clusters that are not c-planar, but admit an even clustered drawing. This shows that a straightforward extension of Theorem~\ref{thm:WeakClusterHT} and Theorem~\ref{thm:StrongClusterHT} to flat clustered graphs with more than two clusters is not possible.

\begin{theorem}
\label{thm:counter}
For every $k\ge 3$ there exists a flat clustered cycle with $k$ clusters that is not c-planar but
admits an even clustered drawing in the plane.
\end{theorem}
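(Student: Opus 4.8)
The plan is to push all the difficulty into the case $k=3$ and obtain the general statement by a padding argument.

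\emph{Reduction of general $k$ to $k=3$.} Suppose we have a flat clustered cycle $(C_0,T_0)$ with exactly three clusters $A,B,C$, each of size at least two, that is not c-planar but admits an even clustered drawing $\mathcal D_0$. Given $k\ge 4$, pick an edge $e=xy$ of $C_0$, subdivide it by $k-3$ new vertices $d_1,\dots,d_{k-3}$ (turning $C_0$ into a longer cycle $C$), and declare each $\{d_i\}$ to be its own non-root cluster; the result $(C,T)$ is a flat clustered cycle with $k$ clusters. It is still not c-planar: a singleton cluster imposes no real constraint, because in any clustered embedding the disc of $\{d_i\}$ may be shrunk to an arbitrarily small neighbourhood of $d_i$ (condition (iv) being automatic at the degree-two vertex $d_i$), so deleting the singleton discs and then suppressing $d_1,\dots,d_{k-3}$ — routing $e$ along the old path — would yield a clustered embedding of $(C_0,T_0)$, a contradiction. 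It still admits an even clustered drawing: first perturb $\mathcal D_0$ so that $e$ has a crossing-free initial segment just after it leaves the cluster disc containing $x$, then place $d_1,\dots,d_{k-3}$ consecutively along this segment and put a tiny disc around each; the sub-edge $d_{k-3}y$ inherits the (even) crossing numbers of $e$, every other new sub-edge is crossing-free, and conditions (i)--(iv) and pairwise disjointness of the discs are clearly preserved. Thus it suffices to construct the three-cluster example $(C_0,T_0)$.

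\emph{The base example.} Here one exhibits a concrete flat clustered cycle with three clusters and verifies the two properties. Non-c-planarity is checked by a direct topological argument: in a hypothetical clustered embedding the cycle is a Jordan curve $\gamma$, each cluster disc $\Delta(X)$ is met by $\gamma$ in a fixed number of pairwise disjoint sub-arcs (one for each block of $X$ along the cycle, which contains exactly the edges of $C_0$ having one endpoint in $X$), and contracting the three discs to points turns $\gamma$ into a fixed plane multigraph; the disjointness of the sub-arcs inside each disc forces an ``unlinked'' condition on their endpoints on $\partial\Delta(X)$, the disjointness of the arcs of $\gamma$ running between discs forces a matching condition, and one shows these are jointly unsatisfiable. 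The existence of an even clustered drawing is then established by an explicit drawing (equivalently, one verifies that the $\mathbb{Z}_2$-valued obstruction behind Theorem~\ref{thm:WeakClusterHT} vanishes for this instance: the coset of independent-odd-crossing vectors attainable by clustered drawings, modulo the subgroup generated by the admissible ``wave moves'' of an edge over a vertex lying in a common cluster disc, contains the zero vector).

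\emph{Where the difficulty lies.} The padding step is routine; the real content is the base example, and it must be genuinely larger than the minimal interleaved case. Indeed, the naive candidate, the $6$-cycle with the three antipodal pairs of vertices as clusters, is not c-planar, but a short parity computation shows that its independent-odd-crossing vector always has odd weight, so it admits no even clustered drawing at all. Hence the three clusters have to be arranged so that each of them meets the cycle in several arcs, which creates enough freedom for the $\mathbb{Z}_2$-obstruction to disappear while the (non-parity) topological obstruction to c-planarity survives. Finding such a configuration and rigorously certifying both of its properties — in particular checking that the even clustered drawing can be made to obey condition (iv) — is the crux of the proof.
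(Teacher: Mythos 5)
There is a genuine gap: the proposal never actually produces the counterexample. Your padding step (subdividing an edge and placing singleton clusters on the new vertices) is fine as far as it goes, and your observation that the interleaved $6$-cycle fails because its winding number is even is correct and consistent with Theorem~\ref{thm:oddwind}. But the entire content of the theorem is the base example, and your second paragraph only describes, in general terms, the \emph{kind} of argument one would run to certify non-c-planarity (``one shows these are jointly unsatisfiable'') and evenness (``one verifies that the $\mathbb{Z}_2$-valued obstruction \dots vanishes''). No concrete cycle, clustering, or drawing is exhibited, and you concede as much in the final sentence: ``Finding such a configuration and rigorously certifying both of its properties \dots is the crux of the proof.'' A proof that defers its crux is not a proof.

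For comparison, the paper handles all $k\ge 3$ uniformly and explicitly: for odd $r\ge 3$ it takes the cycle $C_{kr}$ drawn on a cylinder as the graph of the height function $f(\alpha)=\sin\bigl(\frac{kr+1}{r}\alpha\bigr)$, with vertices at the zeros $\bigl(i\frac{2r}{kr+1}\pi,0\bigr)$ and the $k$ clusters cut out by vertical lines at angles $\frac{2ri+1}{kr+1}\pi$; this is a monotone cyclic-clustered cycle of winding number $r$, so it is not c-planar by the characterization of Cortese et al.\ (c-planar iff the winding number is $-1$, $0$ or $1$), while evenness of the explicit drawing is checked directly. So no reduction to $k=3$ is needed, and the two properties are certified by a citation plus an inspection of a concrete picture rather than by the unsatisfiability and coset computations you sketch. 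To repair your proposal you would need to instantiate the base example --- e.g.\ the $9$-cycle with three clusters of three vertices each in the monotone cyclic order, drawn as above with $k=r=3$ --- and actually carry out the two verifications you outline.
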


 Gutwenger, Mutzel and Schaefer~\cite{GMS14_practical} recently showed that by using the reduction from~\cite{Sch13b_towards} our counterexamples can be turned into counterexamples for~\cite[Conjecture 1.2]{Sch13b_towards}\footnote{For a graph $G$ drawn in the plane the conjecture claims that by redrawing $G$ we can eliminate crossings in a subgraph $H$ of $G$ consisting of independently even edges without introducing new pairs of non-adjacent edges crossing an odd number of times.} and for a variant of the Hanani--Tutte theorem for two simultaneously embedded planar graphs~\cite[Conjecture 6.20]{Sch13b_towards}.

\paragraph{Embedded clustered graphs with small faces.}
A pair $(\mathcal{D}(G),T)$ is an \emph{embedded clustered graph} if $(G,T)$ is a clustered graph and $\mathcal{D}(G)$ is an embedding of $G$ in the plane, not necessarily a clustered embedding.
The embedded clustered graph $(\mathcal{D}(G),T)$ is \emph{c-planar} if it can be extended to a
clustered embedding of $(G,T)$ by choosing a topological disc for each cluster.

We give an alternative polynomial-time algorithm for deciding c-planarity of embedded flat clustered graphs with small faces, reproving a result of Di Battista and Frati~\cite{DiBF09_small_faces}. Our algorithm is based on the matroid intersection theorem. Its running time is $O(|V(G)|^{3.5})$ by~\cite{Cu86_matroid}, so it
does not outperform the linear algorithm from~\cite{DiBF09_small_faces}. Similarly as for our other results, we see its purpose more in mathematical foundations than in giving an efficient algorithm.
 We find it quite surprising that by using completely different techniques we obtained an algorithm
for exactly the same case. Our approach is very similar to a technique used by Katz, Rutter and Woeginger~\cite{KRW12_switch} for deciding the global
connectivity of switch graphs.

\begin{theorem}\label{thm:matroid}{\rm\cite{DiBF09_small_faces}}
Let $\mathcal{D}(G)$ be an embedding of a graph $G$ in the plane such that all its faces are incident to at most five vertices.
Let $(G,T)$ be a flat clustered graph.
The problem whether $(G,T)$ admits a c-planar embedding in which $G$ keeps its embedding $\mathcal{D}(G)$ can be solved in polynomial time.
\end{theorem}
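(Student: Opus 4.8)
The plan is to recast c-planarity of $(\mathcal{D}(G),T)$ as a combinatorial augmentation problem and then solve that problem by one application of matroid intersection. The starting point is the standard reduction via c-connectivity (cf.~\cite{FCE95b_planarity,CDiBFPP08_c_connected,DiBF09_small_faces}): $(\mathcal{D}(G),T)$ admits a c-planar embedding keeping the embedding $\mathcal{D}(G)$ if and only if one can add a set $S$ of edges, each drawn inside a single face of $\mathcal{D}(G)$ without crossings and with both endpoints in one cluster, so that $G[V_i]\cup S$ is connected for every cluster $V_i$. (For clarity I describe the case when $G$ is $2$-connected, so every face is bounded by a simple cycle through at most five vertices; the general case is similar.) Define the \emph{component graph} $H=\bigsqcup_i H_i$, where the vertices of $H_i$ are the connected components of $G[V_i]$; for every face $f$ and every cluster $V_i$ having exactly two distinct components on the boundary of $f$, put one edge $e_{f,i}$ of $H_i$ between those two components, and let $E^{\ast}=\{e_{f,i}\}$. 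Realizing $e_{f,i}$ by a chord of $f$ between those two components merges them, so a set $S\subseteq E^{\ast}$ makes every cluster connected exactly when $S$ contains a spanning tree of every $H_i$.

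Let $M_1$ be the graphic matroid of $H$ (the direct sum of the graphic matroids of the $H_i$); then $r(M_1)=\sum_i(c_i-1)$, where $c_i$ is the number of components of $G[V_i]$, and a basis of $M_1$ is a choice of spanning tree in each $H_i$. Let $M_2$ be the matroid recording which chord families are realizable. The key structural fact, and the only place the bound ``at most five vertices per face'' is used, has two parts. (a) On a face with at most five vertices, each cluster has at most two components on the boundary: two $V_i$-vertices consecutive on $\partial f$ lie in the same component, so the representative positions of distinct components form an independent set in a cycle of length at most five, hence at most two. (b) If two clusters $V_i,V_j$ are both \emph{active} on $f$ (each has two components there), the two chords realizing $e_{f,i}$ and $e_{f,j}$ are vertex-disjoint diagonals of a polygon with at most five vertices (the clusters are disjoint), and any two vertex-disjoint diagonals of such a polygon cross; hence at most one of $e_{f,i},e_{f,j}$ can be used. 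Consequently every face carries at most two elements of $E^{\ast}$ (three active clusters would need six vertices), and the elements on a given face $f$ form either a free matroid (at most one element) or $U_{1,2}$ (two mutually exclusive elements). Thus $M_2:=\bigoplus_f M_{2,f}$ is a matroid, and $S\subseteq E^{\ast}$ is realizable by a non-crossing chord family if and only if $S$ is independent in $M_2$.

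It follows that $(\mathcal{D}(G),T)$ is c-planar if and only if $M_1$ and $M_2$ have a common independent set of size $r(M_1)=\sum_i(c_i-1)$: such a set must restrict to a spanning tree of every $H_i$ (so all clusters become connected) and is realizable (being independent in $M_2$); conversely, any valid $S$ contains a spanning tree of each $H_i$, which is such a common independent set. The algorithm then computes the components of the $G[V_i]$, builds $H$, $M_1$ and $M_2$ (both with constant-time independence oracles), and runs the matroid intersection algorithm; since $|E^{\ast}|=O(|V(G)|)$ and $r(M_1)\le|V(G)|$, this takes $O(|V(G)|^{3.5})$ time by Cunningham~\cite{Cu86_matroid}.

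I expect the main obstacle to be proving the structural fact (a)--(b) and pinning down its failure for larger faces: a hexagon admits two vertex-disjoint \emph{non-crossing} diagonals, so two clusters could be connected through a single face simultaneously, the per-face family of realizable chord sets is then no longer the family of independent sets of a matroid, and the problem ceases to be an intersection of two matroids -- in line with c-planarity of embedded flat clustered graphs being open in general. A secondary point, which needs care but is essentially known, is the reduction to the saturation problem: one has to verify that for flat clustered graphs the ability to connect all clusters inside the faces already implies c-planarity, i.e.\ that the ``single face'' condition for the augmented c-connected graph is automatic, which follows from the planarity of $\mathcal{D}(G\cup S)$.
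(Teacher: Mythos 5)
Your high-level plan coincides with the paper's: reduce to finding a saturator (a non-crossing family of intra-cluster chords, one per face, connecting each cluster), model the connectivity requirement by a direct sum of graphic matroids, model realizability by a per-face matroid, and run matroid intersection; your diagnosis of where the five-vertex bound enters (two vertex-disjoint diagonals of a $\le 5$-gon must cross, while a hexagon admits two disjoint non-crossing ones) is also exactly the paper's. In the $2$-connected case your argument is correct. The genuine gap is the sentence ``the general case is similar'': it is not, and essentially all of the technical work in the paper's proof (its Lemma on merging vertices in bad faces) lives precisely there. When $G$ is not $2$-connected, the boundary of a face incident to at most five vertices is a cactus forest rather than a simple cycle, and both of your structural facts fail. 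For (a), the bound of two components per cluster per face uses that the independence number of $C_{\le 5}$ is $2$; with a disconnected or non-simple boundary a cluster can have three or more components incident to one face, so a single abstract element $e_{f,i}$ per (face, cluster) no longer encodes what must be drawn there. For (b), two vertex-disjoint chords need not cross when their endpoints lie on different connected components of $\partial f$ (or inside different ``enclaves''), so the per-face family of realizable chord sets is not forced to be $U_{1,2}$ or free; worse, in the configuration with three vertices of one cluster and two of another on a five-vertex face, the realizable family can fail to be a matroid at all, so $M_2$ as you define it does not capture realizability and the problem is no longer literally an intersection of two matroids.

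The paper's fix is a preprocessing step you do not have: after contracting intra-cluster edges and discarding loops, it repeatedly \emph{merges} a suitable pair of vertices inside every ``bad'' face (a face admitting two non-crossing saturating edges), proving by a case analysis of the boundary's cactus structure that some merge always preserves c-planarity --- or that the instance is outright non-c-planar. Only after this reduction does every face carry at most two saturating edges which, if both present, cross, so that the per-face matroid is the partition matroid ``at most one chord per face'' and the intersection argument goes through. To complete your proof you would need to supply this reduction (or an equivalent treatment of non-simple face boundaries); as written, the proposal proves the theorem only for $2$-connected $G$.
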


\paragraph{Organization.}
The rest of the paper is organized as follows. In Section~\ref{section_alg} we describe an algorithm for
c-planarity testing based on Theorem~\ref{thm:StrongClusterHT}.
  In Section~\ref{section_weak} we prove Theorem~\ref{thm:WeakClusterHT}.
  In Section~\ref{section_strong} we prove Theorem~\ref{thm:StrongClusterHT}.
  In Section~\ref{section_c_connected} we prove Theorem~\ref{thm:StrongClusterHTC}.
In Section~\ref{section_3clusters} we provide a family of counterexamples to the variant of the Hanani--Tutte theorem for clustered graphs with three clusters, and discuss properties that every such counterexample, whose underlying abstract graph is a cycle, must satisfy.
In Section~\ref{section_faces} we prove Theorem~\ref{thm:matroid}.
We conclude with some remarks in Section~\ref{section_epilogue}.

\section{Algorithm}
\label{section_alg}

Let $(G, T)$ be a clustered graph for which the corresponding variant of the strong Hanani--Tutte theorem holds, that is,
the existence of an independently even clustered drawing of $(G,T)$ implies that $(G,T)$ is c-planar.

Our algorithm for c-planarity testing is an adaptation of the algorithm for planarity testing from~\cite[Section~1.4.2]{Sch13_hananitutte}.
The algorithm starts with an arbitrary clustered drawing $\mathcal{D}$ of $(G,T)$. Such a drawing always exists: for example, we can traverse the tree $T$ using depth-first search and place the vertices of $G$ on a circle in the order encountered during the search. Then we draw every edge as a straight-line segment. Since every cluster consists of consecutive vertices on the circle, the topological discs representing the clusters can be drawn easily.
The algorithm tests whether the edges of the initial drawing $\mathcal{D}$ can be continuously deformed to form
an independently even clustered drawing $\mathcal{D}_0$ of $(G,T)$. This is done by constructing and solving a system of linear equations over $\mathbb{Z}_2$. By the corresponding variant of the strong Hanani--Tutte theorem, the existence of such a drawing $\mathcal{D}_0$ is equivalent to the c-planarity of $(G,T)$.

Now we describe the algorithm in more details. We start with the original algorithm for planarity testing and then show how to modify it for c-planarity testing.

During a ``generic'' continuous deformation from $\mathcal{D}$ to some other drawing $\mathcal{D}'$, the parity of the number of crossings between a pair of independent edges is affected only when an edge $e$ passes over a vertex $v$ that is not incident to $e$,
in which case we change the parity of the number of crossings of $e$ with all the edges incident to $v$; see Figure~\ref{obr_2_1_switch}. We call such an event an \emph{edge-vertex switch}. Note that every edge-vertex switch can be performed independently of others, for any initial drawing: we can always deform a given edge $e$ to pass close to the given vertex $v$, while introducing new crossings with every edge ``far from $v$'' only in pairs; that is, after every event when $e$ touches another edge, a pair of new crossings is created.
 For our purpose the deformation from $\mathcal{D}$ to $\mathcal{D}'$ can be represented by the set of edge-vertex switches that were performed an odd number of times during the deformation.
  An edge-vertex switch of an edge $e$ with a vertex $v$ is denoted by the ordered pair $(e,v)$.

\begin{figure}
\begin{center}
 \ifpdf\includegraphics[scale=1]{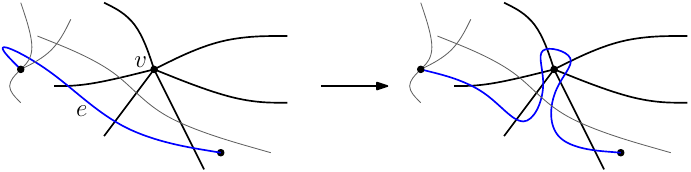}\fi
\end{center}
 \caption{A continuous deformation of $e$ resulting in an edge-vertex switch $(e,v)$.}
 \label{obr_2_1_switch}
\end{figure}

A drawing of $(G,T)$ can then be represented as a vector ${\bf v}\in\mathbb{Z}_2^{M}$, where
$M$ is the number of unordered pairs of independent edges. The component of ${\bf v}$ corresponding to a pair $\{e,f\}$ is $1$ if $e$ and $f$ cross an odd number of times and $0$ otherwise.
Let $e$ be an edge of $G$ and $v$ a vertex of $G$ such that $v \notin e$. Performing an edge-vertex switch $(e,v)$ corresponds to adding the vector ${\bf w}_{(e,v)}\in\mathbb{Z}_2^{M}$ whose only components equal to $1$ are those indexed by pairs $\{e,f\}$ where $f$ is incident to $v$.
The set of all drawings of $G$ that can be obtained from $\mathcal{D}$ by edge-vertex switches then corresponds to an affine subspace ${\bf v}+W$, where $W$ is the subspace generated by the set $\{{\bf w}_{(e,v)}; v \notin e\}$.
The algorithm tests whether ${\bf 0}\in {\bf v}+W$, which is equivalent to the solvability of a system of linear equations over $\mathbb{Z}_2$.

The difference between the original algorithm for planarity testing and our version for $c$-planarity testing is the following.
To keep the drawing of $(G,T)$ clustered after every deformation, for every edge $e=v_1v_2$, we allow only those edge-vertex switches $(e,v)$ such that $v$ is a child of some vertex of the shortest path between $v_1$ and $v_2$ in $T$. Such vertices $v$ are precisely those that are not separated from $e$ by cluster boundaries.

We also include \emph{edge-cluster switches} $(e,C)$ where $C$ is a child of some vertex of the shortest path between $v_1$ and $v_2$ in $T$. An edge-cluster switch $(e,C)$ moves $e$ over the whole topological disc representing $C$; see Figure~\ref{obr_2_2_switch_in_clustered}. Combinatorially, this is equivalent to performing all the edge-vertex switches $(e,v), v\in C$, simultaneously.
The corresponding vector ${\bf w}_{(e,C)}$ is the sum of all ${\bf w}_{(e,v)}$ for $v\in C$. Therefore, the set of allowed switches generates a subspace $W_c$ of $W$. Since every allowed switch can be performed in every clustered drawing, every vector from $W_c$ can be realized by some continuous deformation. Moreover, every clustered drawing of $(G,T)$ can be obtained from any other clustered drawing of $(G,T)$ by a homeomorphism of the plane and by a sequence of finitely many continuous deformations of the edges, where each of the deformations can be represented by a subset of allowed switches.
Indeed, by~\cite[Theorem 1.18]{Sch13_hananitutte} or by the discussion of the original algorithm in previous paragraphs, the vectors $\mathbf{v}$ and $\mathbf{v'}$ corresponding to two clustered drawings $\mathcal{D}$ and $\mathcal{D}'$ of $(G,T)$ differ by a vector $\mathbf{w}\in W$. We claim that $\mathbf{w}\in W_c$. Suppose that $\mathcal{D}$ and $\mathcal{D}'$ have the same vertices. Let $e$ be an edge of $G$, let $e_0$ be the curve representing $e$ in $\mathcal{D}$, and let $e_1$ be the curve representing $e$ in $\mathcal{D}'$. Let $\gamma$ be the closed curve obtained by joining $e_0$ and $e_1$. Let $S$ be the set of vertices of $G$ ``inside'' $\gamma$; see Subsection~\ref{sub_proof_of_theorem2} part 2) for the definition. For every cluster $C$ that $e$ cannot cross, all the vertices of $C$ belong to the same connected region of $\mathbb{R}^2\setminus \gamma$; in particular, they are all ``inside'' or all ``outside'' $\gamma$. For every cluster $C$ whose vertices are ``inside'' $\gamma$, we perform the switch $(e,C)$ and perform the corresponding deformation on the curve $e_0$. Let $e_{1/2}$ be the resulting curve. The closed curve obtained by joining $e_{1/2}$ and $e_1$ has all the vertices of $G$ ``outside''. Therefore, if we now deform $e_{1/2}$ to $e_1$ arbitrarily, every vertex will be crossed an even number of times, so no changes in the parity of crossings between independent edges will occur.


Our algorithm then tests whether ${\bf 0}\in {\bf v}+W_c$.

\begin{figure}
\begin{center}
 \ifpdf\includegraphics[scale=1]{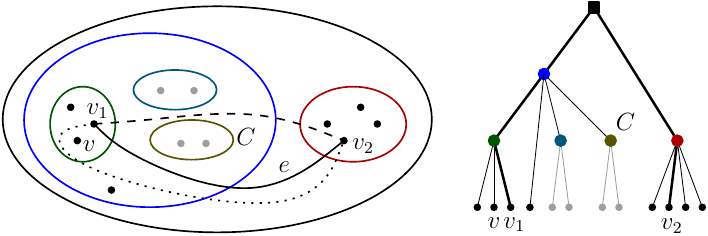}\fi
\end{center}
 \caption{Left: an edge-vertex switch $(e,v)$ and an edge-cluster switch $(e,C)$. Right: the shortest path between $v_1$ and $v_2$ in $T$. The four light gray vertices in the middle cannot participate in a switch with $e$ individually.}
 \label{obr_2_2_switch_in_clustered}
\end{figure}

Before running the algorithm, we first remove any loops and parallel edges and check whether $|E(G')|<3|V(G')|$ for the resulting graph $G'$. Then we run our algorithm on $(G',T)$. This means solving a system of $O(|E(G')||V(G')|)=O(|V(G)|^2)$ linear equations in $O(|E(G')|^2)=O(|V(G)|^2)$ variables. This can be performed in $O(|V(G)|^{2\omega})\le O(|V(G)|^{4.746})$ time using the algorithm by Ibarra, Moran and Hui~\cite{IMH82_matrix}.

Gutwenger, Mutzel and Schaefer~\cite{GMS14_practical} independently proposed a different algebraic algorithm for testing clustered planarity, based on a reduction to simultaneous planarity. It is not hard to show that their algorithm is equivalent to ours, in the sense that both algorithms accept the same instances of clustered graphs.


\section{Weak Hanani--Tutte for two-clustered graphs}

\label{section_weak}

First, we prove a stronger version of a special case of Theorem~\ref{thm:WeakClusterHT} in which $G$ is a bipartite multigraph with the two parts corresponding to the two clusters. We note that a bipartite multigraph has no loops, but it can have multiple edges. In this stronger version, which is an easy consequence of the weak Hanani--Tutte theorem, we assume only the existence of an arbitrary even drawing of $G$
that does not have to be a clustered drawing.

\begin{lemma}
\label{lemma:WeakBipartiteClusterHT}
Let $(G,T)$ be a two-clustered bipartite multigraph in which the two non-root clusters induce independent sets.
If $G$ admits an even drawing then $(G,T)$ is c-planar.
Moreover, there exists a clustered embedding of $(G,T)$ with the same rotation system as in the given even drawing of $G$.
\end{lemma}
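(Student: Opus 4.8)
The plan is to reduce the statement to the ordinary weak Hanani--Tutte theorem applied to an auxiliary graph $G'$ obtained by adding a single ``apex-like'' gadget that forces the two color classes to separate. First I would contract each non-root cluster to a single vertex in the tree picture: since $A$ and $B$ induce independent sets, the only edges of $G$ go between $A$ and $B$. Let $G' = G \cup \{r_A, r_B, r_Ar_B\}$, where $r_A$ is a new vertex joined to every vertex of $A$, $r_B$ is joined to every vertex of $B$, and $r_A$ is joined to $r_B$; equivalently one may think of $r_A,r_B$ as subdivided ``roots'' of two stars. Starting from the given even drawing $\mathcal{D}$ of $G$, I would extend it to an even drawing $\mathcal{D}'$ of $G'$: place $r_A$ very far out in the outer face and route the edges $r_Av$ ($v\in A$) so that each new edge crosses every old edge an even number of times, which is always possible by rerouting along a fixed generic curve and fixing up parities in pairs; do the same for $r_B$ and for $r_Ar_B$. (This is exactly the ``edge-vertex switch'' bookkeeping recalled in Section~\ref{section_alg}: adding a new edge to an even drawing and making it even costs only parity corrections that come in pairs.)

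The key step is then to apply the weak Hanani--Tutte theorem to $G'$, which is planar-or-not according to whether $\mathcal{D}'$ is even; since $\mathcal{D}'$ is even, $G'$ has an embedding $\mathcal{E}'$ with the same rotation system as $\mathcal{D}'$. Restricting $\mathcal{E}'$ to $G$ gives an embedding $\mathcal{E}$ of $G$ with the same rotation system as $\mathcal{D}$, which handles the ``rotation system'' clause. It remains to extract the two cluster discs $\Delta(A),\Delta(B)$ from $\mathcal{E}'$. The point is that in $\mathcal{E}'$ the edges $r_Av$ form a plane star $S_A$ whose removal leaves the vertices of $A$ on the boundary of a single face; take $\Delta(A)$ to be a small neighborhood of $S_A$ (a topological disc), which contains exactly the vertices of $A$, and symmetrically $\Delta(B)$ a neighborhood of $S_B$. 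Because $r_Ar_B$ is an edge of the embedding, $S_A$ and $S_B$ lie in disjoint regions, so $\Delta(A)\cap\Delta(B)=\emptyset$; and each edge $uv$ of $G$ with $u\in A$, $v\in B$ enters $\Delta(A)$ once at $u$ and $\Delta(B)$ once at $v$, so condition (iv) holds. Thus $\mathcal{E}$ together with $\Delta(A),\Delta(B)$ is a clustered embedding.

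The main obstacle is making the disc-extraction rigorous: one has to argue that the star $S_A$, as it sits inside the embedding $\mathcal{E}'$, has a closed regular neighborhood that is a disc meeting $G$ in precisely $A$ and crossing each $A$--$B$ edge exactly once, and that this neighborhood can be taken disjoint from the corresponding neighborhood of $S_B$. This is where the edge $r_Ar_B$ earns its keep: it certifies that $r_A$ and $r_B$ (hence the two stars) are not nested in a way that would force an edge to cross a cluster boundary twice. A clean way to phrase it is to delete the interiors of the two discs and observe that what remains is an embedding of the ``reduced'' graph in an annulus-like region with $A$ on one boundary circle and $B$ on the other, so every $G$-edge crosses each boundary circle once; I would spell this out using the face structure of $\mathcal{E}'$ rather than appealing to any surface surgery. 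A secondary, purely routine point is the parity fix-up when extending $\mathcal{D}$ to $\mathcal{D}'$, which I would dispatch in one sentence by the standard ``route along a generic reference arc and correct crossings in pairs'' argument already invoked in Section~\ref{section_alg}.
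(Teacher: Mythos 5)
Your reduction to an auxiliary apex graph $G'$ has a fatal flaw: $G'$ need not be planar even when $G$ is, so the weak Hanani--Tutte theorem cannot be applied to it and no even drawing $\mathcal{D}'$ of $G'$ exists. Concretely, take $G=K_{2,4}$ with $A=\{a_1,a_2\}$ and $B=\{b_1,\dots,b_4\}$; this is planar and bipartite with both clusters independent, so the lemma applies, but your $G'$ contains $r_B$ adjacent to all of $B$, hence the three vertices $a_1,a_2,r_B$ together with $B$ form a $K_{3,4}$, and $G'$ is non-planar. Consequently the step ``extend $\mathcal{D}$ to an even drawing of $G'$ by fixing up parities in pairs'' cannot succeed: the only moves available are edge-vertex switches, which flip the crossing parity of a new edge with \emph{all} edges at a vertex simultaneously, not with individual edges, and since an even (or even independently even) drawing of $G'$ would certify planarity of $G'$, no sequence of such moves can produce one. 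The underlying geometric obstruction is that an apex joined to all of $A$ forces all of $A$ onto a single face, which is simply not implied by the hypotheses.

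The paper's proof sidesteps this by applying the weak Hanani--Tutte theorem directly to $G$ (no augmentation), obtaining an embedding with the prescribed rotation system, and only then constructing $\Delta(A)$ as a regular neighborhood of a \emph{spanning tree} of $A$ routed through the faces of that embedding: in each face $f$ one draws a star on the $A$-vertices incident to $f$, the union over all faces is connected because the dual graph is connected and consecutive faces share an $A$-vertex, and a spanning tree $T_A$ of this auxiliary multigraph has a disc neighborhood whose boundary meets every edge of $G$ at most once; $\Delta(B)$ is then carved out of the complement. If you want to rescue your approach, you would have to replace the single apex $r_A$ by exactly such a face-by-face tree gadget, at which point you have reproduced the paper's argument; the apex version, as stated, is unrecoverable.
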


\begin{proof}
We assume that $G=(V,E)$ is connected, since we can draw each connected component separately.
Let $A$ and $B$ be the two clusters of $(G,T)$ forming a partition of $V(G)$.
By the weak Hanani--Tutte theorem~\cite{CN00_thrackles,PSS06_removing} we obtain an embedding $\mathcal{D}$ of $G$ with the same rotation system as in the initial even drawing of $G$.

It remains to show that we can draw the discs representing the clusters. This follows from a much stronger geometric result by Biedl, Kaufmann and Mutzel~\cite[Corollary 1]{BiedlKM98_II}. We need only a weaker, topological, version, which has a very short proof. For each face $f$ of $\mathcal{D}$, we may draw without crossings a set $E_f$ of edges inside $f$ joining one chosen vertex in $A$ incident to $f$ to all other vertices in $A$ incident to $f$. Since the dual graph of $G$ in $\mathcal{D}$ is connected, the multigraph $(A,\bigcup_f E_f)$ is connected as well. Let $E'$ be a subset of $\bigcup_f E_f$ such that $T_A=(A,E')$ is a spanning tree of $A$. A small neighborhood of $T_A$ is an open topological disc $\Delta_A$ containing all vertices of $A$, and the boundary of $\Delta_A$ crosses every edge of $G$ at most once; see Figure~\ref{fig_3_lemma_bipartite}. In the complement of $\Delta_A$ we can easily find a topological disc $\Delta_B$ containing all vertices of $B$, by drawing its boundary partially along the boundary of $\Delta_A$ and partially along the boundary of the outer face of $\mathcal{D}$.
\end{proof}

\begin{figure}
\begin{center}
 \ifpdf\includegraphics[scale=1]{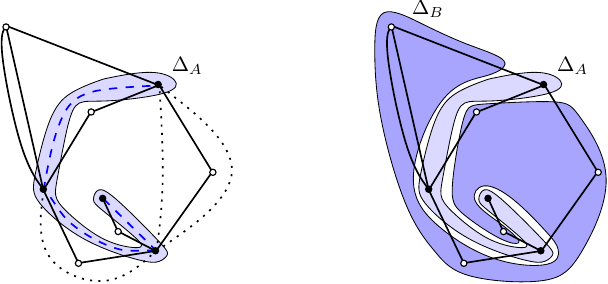}\fi
\end{center}
 \caption{Left: Drawing the disc $\Delta_A$. The edges of $E'$ are dashed, while the edges of $\bigcup_{f}E_{f} \setminus E'$ are dotted. Right: Drawing the disc $\Delta_B$.}
 \label{fig_3_lemma_bipartite}
\end{figure}

\subsection{Proof of Theorem~\ref{thm:WeakClusterHT}}


The proof is inspired by the proof of the weak Hanani--Tutte theorem from~\cite{PSS06_removing}.

Let $A$ and $B$ be the two clusters of $(G,T)$ forming a partition of $V(G)$.
We assume that $G$ is connected, since we can embed each component separately.
We start with an even clustered drawing of $(G,T)$. We proceed by induction on the number of vertices.

\begin{figure}
\begin{center}
 \ifpdf\includegraphics{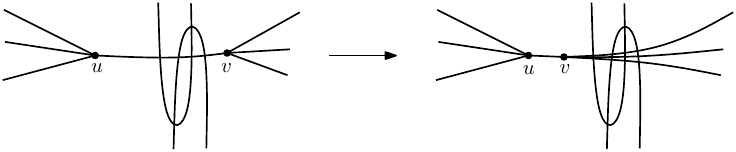}\fi
\end{center}
 \caption{Pulling $v$ towards $u$. The evenness of the drawing is preserved.}
 \label{obr_3_1_pull}
\end{figure}

First, we discuss the inductive step.
If we have an edge $e$ between two vertices $u,v$ in the same part (either $A$ or $B$), we contract $e$ by pulling $v$ along $e$ towards $u$ while dragging all the other edges incident to $v$ along $e$ as well. See Figure~\ref{obr_3_1_pull}. We keep all resulting loops and multiple edges. If some edge crosses itself during the dragging, we eliminate the self-crossing by a local redrawing. The resulting drawing is still a clustered drawing. This operation keeps the drawing even and it also preserves the rotation at each vertex. Then we apply the induction hypothesis and decontract the edge $e$.
This can be done without introducing new crossings, since the rotation system has been preserved during the induction.

In the base step, $G$ is a multigraph consisting of a bipartite multigraph $H$ with parts $A$ and $B$ and possible additional loops at some vertices.
We can embed $H$ by Lemma~\ref{lemma:WeakBipartiteClusterHT}. It remains to embed the loops.
Note that after the contractions, no loop crosses the boundary of a cluster.
Each loop $l$ divides the rotation at its corresponding vertex $v(l)$ into two intervals. One of these intervals contains no end piece of an edge connecting $A$ with $B$, otherwise $l$ would cross some edge of $H$ an odd number of times. Call such an interval a \emph{good} interval in the rotation at $v(l)$.
Observe that there are no two loops $l_1$ and $l_2$ with $v(l_1)=v(l_2)=v$ whose end-pieces have the order $l_1,l_2,l_1,l_2$ in the rotation at $v$, as otherwise the two loops would cross an odd number of times.
Hence, at each vertex the good intervals of every pair of loops are either nested or disjoint.

We use induction on the number of loops to draw all the loops at a given vertex $v$ without crossings and without changing the rotation at $v$. For the inductive step, we remove a loop $l$ whose good interval in the rotation at $v$ is inclusion minimal. Such an interval contains only the two end-pieces of $l$, since
there exist no edges between a pair of vertices in $A$ or $B$.
By induction hypothesis, we can embed the rest of the loops without changing the rotation at $v$. Finally, we can draw $l$ in a close neighborhood of $v$ within the face determined by the original rotation at $v$. This concludes our discussion of the base step of the induction and the proof of the theorem.



\section{Strong Hanani--Tutte for two-clustered graphs}

In this section we prove Theorem~\ref{thm:StrongClusterHT}.
Let $(G,T)$ be a two-clustered graph.
Let $A$ and $B$ be the two clusters of $(G,T)$ forming a partition of $V(G)$.
For a subset $V'\subseteq V(G)$, let $G[V']$ denote the subgraph of $G$ induced by $V'$.
The following lemma gives a characterization of c-planarity for two-clustered graphs, similar to the one for c-connected clustered graphs~\cite[Theorem 1]{FCE95b_planarity}.

\begin{lemma}
\label{lemma:twoClusters}
An embedding of a two-clustered graph $(G,T)$ is a clustered embedding if and only if $G[B]$ is contained in the outer face of $G[A]$ and $G[A]$ is contained in the outer face of $G[B]$.
\end{lemma}

\begin{proof}
The ``only if'' part is trivial. Let $\mathcal{D}$ be an embedding of $G$ in which $G[B]$ is contained in the outer face of $G[A]$ and vice-versa. First we extend $\mathcal{D}$ to an embedding $\mathcal{D}_1$ of a connected two-clustered graph $(G_1,T)$ by adding the minimum necessary number of edges between the components of $\mathcal{D}$. (If $G$ is connected, then $G_1=G$ and $\mathcal{D}_1=\mathcal{D}$.) The embedding $\mathcal{D}_1$ still satisfies the assumptions of the lemma, since adding an edge between two components creates no cycle.

Next we contract each component of $G_1[A]\cup G_1[B]$ in $\mathcal{D}_1$ to a point, while keeping all the loops and multiple edges, and preserving the rotations of the vertices. Let $\mathcal{D}_2$ be the resulting embedding and $(G_2,T_2)$ the corresponding two-clustered multigraph.
The connectedness of $G_2$ and the assumption of the lemma imply that the interior of every loop in $\mathcal{D}_2$ is empty of vertices.
We remove all the loops, and apply Lemma~\ref{lemma:WeakBipartiteClusterHT} to the resulting two-clustered multigraph $(G_3,T_3)$. We obtain topological discs $\Delta_{A}$ and $\Delta_{B}$ certifying the c-planarity of $(G_3,T_3)$. Finally,
we reintroduce the loops and decontract the components of $G_1[A]$ and $G_1[B]$ inside the discs $\Delta_{A}$ and $\Delta_{B}$, respectively. Finally, we delete the edges connecting the components of $G$.
\end{proof}

By the assumption of Theorem~\ref{thm:StrongClusterHT} and the strong Hanani--Tutte theorem, $G$ has an embedding.
However, in this embedding, $G[B]$ does not have to be contained in a single face of $G[A]$ and vice-versa. Hence, we cannot guarantee that a clustered embedding of $(G,T)$ exists so easily.

For an induced subgraph $H$ of $G$, the {\em boundary\/} of $H$ is the set of vertices in $H$ that have a neighbor in $G-H$. We say that an embedding $\mathcal{D}(H)$ of $H$ is {\em exposed\/} if all vertices on the boundary of $H$ are incident to the outer face of $\mathcal{D}(H)$.

The following lemma is an easy consequence of the strong Hanani--Tutte theorem.
It helps us to find an exposed embedding of each connected component $X$ of $G[A] \cup G[B]$.
Later in the proof of Theorem~\ref{thm:StrongClusterHT}
this allows us to remove non-essential parts of each such component $X$ and concentrate only on a subgraph $G'$
of $G$ in which both $G[A]$ and $G[B]$ are outerplanar.

\label{section_strong}

\begin{lemma}
\label{lemma:single_face}
Suppose that $(G,T)$ admits an independently even clustered drawing.
Then every connected component of $G[A] \cup G[B]$ admits an exposed embedding.
\end{lemma}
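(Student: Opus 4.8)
The plan is to reduce the statement to the strong Hanani--Tutte theorem applied to a suitable auxiliary graph built from a single connected component $X$ of $G[A]\cup G[B]$. Fix such a component $X$; note that $X$ is entirely contained in one cluster, say $X\subseteq A$, because edges of $G[A]\cup G[B]$ never connect $A$ to $B$. Let $\partial X$ be the boundary of $X$, i.e.\ the vertices of $X$ having a neighbor in $G-X$. The goal is to produce an embedding of $X$ in which every vertex of $\partial X$ lies on the outer face.

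First I would take the given independently even clustered drawing $\mathcal{D}$ of $(G,T)$, and restrict attention to the sub-drawing of $X$ together with the disc $\Delta(A)$: since $\mathcal{D}$ is a clustered drawing, all of $X$ lies inside $\Delta(A)$, and since $X$ is a connected subgraph of $G[A]$, no edge of $X$ crosses $\partial\Delta(A)$. The key point is that the part of the drawing \emph{outside} $\Delta(A)$ certifies, in a Hanani--Tutte sense, that $X$ can be embedded with $\partial X$ exposed. Concretely, I would contract everything outside $\Delta(A)$ (more precisely, the part of $G$ drawn in the complement of $\Delta(A)$, together with all of $B$) to a single new vertex $z$, placed in the outer region, and join $z$ by an edge to each vertex of $\partial X$, routing these edges inside the annular region between $X$ and $\partial\Delta(A)$ following the original edges of $G$ leaving $X$. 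Because $\mathcal{D}$ is independently even and because contraction along connected pieces (as used already in the proof of Theorem~\ref{thm:WeakClusterHT}) preserves evenness of independent pairs up to controlled local redrawings, the resulting drawing of the graph $X^{+}:=X\cup\{z\}\cup\{zx : x\in\partial X\}$ is an independently even drawing. Here one has to be a little careful: two edges $zx, zx'$ may cross oddly, but since they all emanate from the single vertex $z$ they are not independent, so this does not violate the hypothesis; and an edge $zx$ versus an edge of $X$ not incident to $x$ is independent, and its crossing parity is inherited from crossings in $\mathcal{D}$ between the corresponding $G$-edge leaving $X$ and the edge of $X$, which is even.

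Then I would apply the strong Hanani--Tutte theorem to $X^{+}$ to obtain an embedding of $X^{+}$. Deleting $z$ from this embedding yields an embedding of $X$; and since every vertex of $\partial X$ was adjacent to $z$, all of $\partial X$ lies on the face of $X$ that contained $z$. Choosing that face as the outer face gives the desired exposed embedding of $X$. Doing this independently for each connected component $X$ of $G[A]\cup G[B]$ finishes the proof.

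The main obstacle I anticipate is the bookkeeping in the contraction/augmentation step: verifying that one can legitimately shrink the entire ``outside'' of $\Delta(A)$ to a single vertex $z$ and route the new edges $zx$ so that the parity of every independent pair in $X^{+}$ is even. The subtlety is that $G-X$ need not be connected, so the edges from $X$ to the rest of $G$ may enter several different faces or regions; one must argue that, within $\Delta(A)$, all these edges can nonetheless be rerouted to a common point on $\partial\Delta(A)$ while only changing crossing parities in pairs (an edge-vertex switch type argument), and that crossings introduced among the $zx$ edges are harmless because those edges share the vertex $z$. Making this rerouting precise — essentially showing that ``the outer face of $X$ inside the annulus'' behaves like a single vertex for the purposes of Hanani--Tutte — is where the real work lies; everything else is a direct invocation of the strong Hanani--Tutte theorem plus the contraction lemma already used above.
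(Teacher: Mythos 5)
Your proposal is correct and follows essentially the same route as the paper: the paper likewise contracts the disc containing the other cluster to a single vertex $v$ joined to all of $\partial X$, observes that the resulting drawing of the augmented graph is independently even, applies the strong Hanani--Tutte theorem, moves $v$ to the outer face, and deletes it. The rerouting difficulty you anticipate does not actually arise, since contracting the entire complementary disc to a point already gathers all edge ends at one vertex without disturbing any crossing parities involving edges of $X$.
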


\begin{proof}
Let $\mathcal{D}$ be an independently even clustered drawing of $(G,T)$. Let $\Delta_A$ and $\Delta_B$ be the two topological discs representing the clusters $A$ and $B$, respectively.

Let $X$ be a component of $G[A]$. (For components of $G[B]$ the proof is analogous.)
Let $\partial X$ be the boundary of $X$.
For $Y\subseteq A$, let $E(Y,B)$ be the set of edges connecting a vertex in $Y$ with a vertex in $B$. Observe that $E(X,B)=E(\partial X, B)$. We replace $B$ by a single vertex $v$ and connect it to all vertices of $\partial X$. We obtain a graph $X'=(V(X)\cup \{v\}, E(X)\cup \{uv; u\in \partial X\}$.

We get an independently even drawing of $X'$ from $\mathcal{D}$ by contracting $\Delta_B$ to a point and removing the vertices in $A\setminus X$ and all parallel edges. By the strong Hanani--Tutte theorem we obtain an embedding of $X'$. By changing this embedding so that $v$ gets to the outer face and then removing $v$ with all incident edges, we obtain an exposed embedding of $X$.
\end{proof}

\subsection{Proof of Theorem~\ref{thm:StrongClusterHT}}
\label{sub_proof_of_theorem2}

The proof is inspired by the proof of the strong Hanani--Tutte theorem from~\cite{PSS06_removing}. Its outline is as follows.
First we obtain a subgraph $G'$ of $G$ containing the boundary of each component of $G[A]$ and $G[B]$ and such that each of $G'[A]$ and $G'[B]$ is a {\em cactus forest}, that is, a graph where every two cycles are edge-disjoint. Equivalently, a cactus forest is a graph with no subdivision of $K_4-e$. A connected component of a cactus forest is called a {\em cactus}.
Then we apply the strong Hanani--Tutte theorem to a graph which is constructed from $G'$ by splitting vertices common to at least two cycles in $G'[A]$ and $G'[B]$, and turning all cycles in $G'[A]$ and $G'[B]$ into wheels.
The wheels guarantee that everything that has been removed from $G$ in order to obtain $G'$ can be inserted back. Finally we draw the clusters using Lemma~\ref{lemma:WeakBipartiteClusterHT}.

Now we describe the proof in detail.
Let $X_1,\ldots, X_k$ be the connected components of $G[A]\cup G[B]$.
By Lemma~\ref{lemma:single_face} we find an exposed embedding $\mathcal{D}(X_i)$ of each $X_i$.
Let $X_i'$ denote the subgraph of $X_i$ obtained by deleting from $X_i$ all the vertices and edges not incident to the outer face of $\mathcal{D}(X_i)$. Observe that $X_i'$ is a cactus.

Let $G'=(\bigcup_{i=1}^k X'_i) \cup E(A,B)$. That is, $G'$ is a subgraph of $G$ that consists of all the cacti $X_i'$ and all edges between the two clusters.
Let $\mathcal{D}'$ denote the drawing of $G'$ obtained from the initial independently even clustered drawing of $G$
by deleting the edges and vertices of $G$ not belonging to $G'$. Thus, $\mathcal{D}'$ is an independently even clustered drawing of $G'$.

In what follows we process the cycles of $G'[A]$ and $G'[B]$ one by one.
We will be modifying $G'$ and also the drawing $\mathcal{D}'$.
We will maintain the property that every processed cycle is vertex-disjoint with all other cycles in $G'[A]$ and $G'[B]$, and every edge of every processed cycle is even in $\mathcal{D}'$. Initially, the property is met as no cycle is processed.
Let $C$ denote an unprocessed cycle in $G'[A]$. For cycles in $G'[B]$, the procedure is analogous. We proceed in several steps.

\paragraph{1) Correcting the rotations.}
For every vertex $v$ of $C$, we redraw the edges incident to $v$ in a small neighborhood of $v$, and change the rotation at $v$, as follows~\cite{PSS06_removing}. If the two edges $e,f$ of $C$ incident to $v$ cross an odd number of times, we redraw one of them, say, $f$, so that they cross evenly. Next, we redraw every other edge incident to $v$ so that it crosses both $e$ and $f$ evenly; see Figure~\ref{obr_4_1_correcting_rotations}. After we perform these modifications at every vertex of $C$, all the edges of $C$ are even. However, some pairs of edges incident to a vertex of $C$ may cross oddly; see Figure~\ref{obr_4_1_correcting_rotations} d). Moreover, no processed cycles have been affected since they are vertex-disjoint with $C$.

\begin{figure}
\begin{center}
 \ifpdf\includegraphics[scale=1]{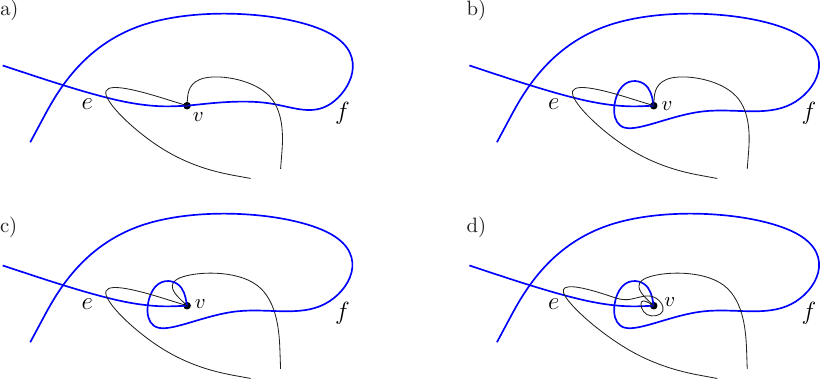}\fi
\end{center}
 \caption{Making $e$ and $f$ even by changing the drawing locally around $v$.}
 \label{obr_4_1_correcting_rotations}
\end{figure}

\paragraph{2) Cleaning the ``inside''.}
We two-color the connected components of $\mathbb{R}^2\setminus C$ so that two regions
sharing a nontrivial part of their boundary receive opposite colors. The existence of such a coloring is a well-known fact; for example, one can color the points of $\mathbb{R}^2\setminus C$ using the parity of the winding number of $C$. We say
that a point not lying on $C$ is ``outside'' of $C$ if it is contained in the region
with the same color as the unbounded region. Otherwise,
such a point is ``inside'' of $C$.

A {\em $C$-bridge\/} in $G'$ is a ``topological'' connected component of $G' - E(C)$; that is, a connected component $K$ of $G' - C$ together with all the edges connecting $K$ with $C$, or a chord of $C$ in $G'$. We say that a $C$-bridge $L$ is {\em outer\/} if all edges of $L$ incident to $C$ attach to the vertices of $C$ from ``outside''. Similarly, we say that a $C$-bridge $L$ is {\em inner\/} if all edges of $L$ incident to $C$ attach to the vertices of $C$ from ``inside''. Since all the edges of $C$ are even, every $C$-bridge is either outer or inner. A {\em $C$-bridge\/} is {\em trivial\/} if it attaches only to one vertex of $C$; otherwise it is {\em nontrivial}. Since $C$ is edge-disjoint with all cycles in $G'[A]$, every nontrivial $C$-bridge contains a vertex of $B$. Since $\mathcal{D}'$ is a clustered drawing of $G'$, all vertices of $G'[B]$ lie ``outside'' of $C$, and so every nontrivial $C$-bridge is outer. Therefore, every inner $C$-bridge is trivial. We redraw every inner $C$-bridge $L$ as follows. Let $v$ be the vertex of $C$ to which $L$ is attached. We select a small region in the neighborhood of $v$ ``outside'' of $C$, and draw $L$ in this region by continuously deforming the original drawing of $L$, so that $L$ crosses no edge outside $L$; see Figure~\ref{obr_4_2_inner_bridges}. After this step, nothing is attached to $C$ from ``inside''.

\begin{figure}
\begin{center}
 \ifpdf\includegraphics[scale=1]{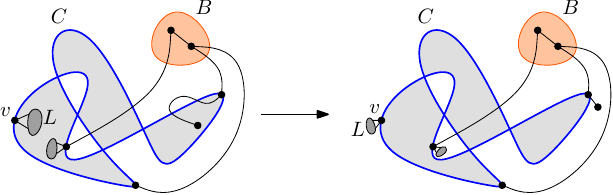}\fi
\end{center}
 \caption{Transforming inner $C$-bridges into outer $C$-bridges. Every nontrivial $C$-bridge contains a vertex in $B$.}
 \label{obr_4_2_inner_bridges}
\end{figure}

\paragraph{3) Vertex splitting.}
Let $v$ be a vertex of $C$ belonging to at least one other cycle in $G'[A]$. Let $x$ and $y$ be the two neighbors of $v$ in $C$. By the previous step, the edges $xv$ and $yv$ are consecutive in the rotation at $v$. We split the vertex $v$ by replacing it with two new vertices $v'$ and $v''$ connected by an edge, and draw them very close to $v$. We replace the edges $xv$ and $yv$ by edges $xv'$ and $yv'$, respectively. For every neighbor $u$ of $v$ that is not on $C$, we replace the edge $uv$ by an edge $uv''$. See Figure~\ref{obr_4_3_vertex_splitting}. Clearly, this vertex-splitting introduces no pair of independent edges crossing oddly. Moreover, after all the splittings, $C$ is vertex-disjoint with all cycles in $G'[A]$.

\begin{figure}
\begin{center}
 \ifpdf\includegraphics[scale=1]{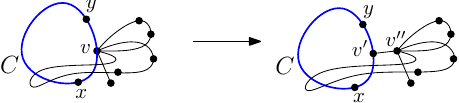}\fi
\end{center}
 \caption{Splitting a vertex $v$ common to several cycles in $G'[A]$.}
 \label{obr_4_3_vertex_splitting}
\end{figure}

\paragraph{4) Attaching the wheels.}
Now we fill the cycle $C$ with a wheel. More precisely, we add a vertex $v_C$ into $A$ and place it very close to an arbitrary vertex of $C$ ``inside'' of $C$. We connect $v_C$ with all the vertices of $C$ by edges that closely follow the closed curve representing $C$ either from the left or from the right, and attach to their endpoints on $C$ from ``inside''; see Figure~\ref{obr_4_4_wheel}. We allow portions of these new edges to lie ``outside'' of $C$ only near self-crossings of $C$. In particular, in the neighborhoods of vertices of $C$, the new edges are always ``inside'' of $C$. Since no $C$-bridge is inner, all the new edges are even.

\begin{figure}
\begin{center}
 \ifpdf\includegraphics[scale=1]{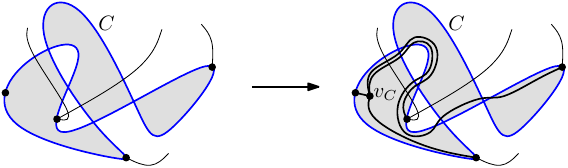}\fi
\end{center}
 \caption{Attaching a wheel to $C$.}
 \label{obr_4_4_wheel}
\end{figure}

\bigskip

Let $G''$ denote the graph obtained after processing all the cycles of $G'[A]$ and $G'[B]$.
Now we apply the strong Hanani--Tutte theorem to $G''$.
We further modify the resulting embedding in several steps so that in the end, the only vertices and edges of $G''$ not incident to the outer face of $G''[A]$ or $G''[B]$ are the vertices $v_C$ that form the centers of the wheels, and their incident edges. First, suppose that some of the wheels are embedded so that their central vertex $v_C$ is in the outer face of the wheel. Then the outer face is a triangle, say $v_Cuw$. We can then redraw the edge $uw$ along the path $uv_Cw$, without crossings, so that $v_C$ gets inside the wheel. We fix all the wheels in this way. Next, if some of the wheels contains another part of $G''$ in some of its inner faces, we flip the whole part over an edge of the wheel to its outer face, without crossings. See Figure~\ref{obr_4_5_flipping}. After finitely many flips, all the inner faces of the wheels will be empty.

\begin{figure}
\begin{center}
 \ifpdf\includegraphics[scale=1]{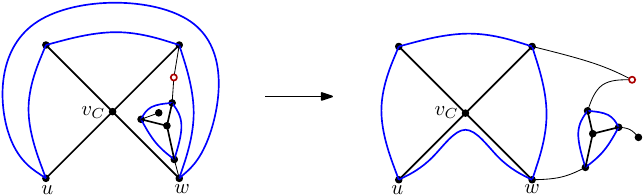}\fi
\end{center}
 \caption{Fixing the wheels and flipping everything else to the outer face of $G''[A]$. The circle represents a vertex in $B$.}
 \label{obr_4_5_flipping}
\end{figure}

After the modifications, $G''[A]$ is drawn in the outer face of $G''[B]$ and vice-versa. In the resulting embedding we delete all the vertices $v_C$ and contract the edges between the pairs of vertices $v', v''$ that were obtained by vertex-splits.

Thus, we obtain an embedding of $G'$ in which for every component $X_i$ of $G'[A]\cup G'[B]$, all vertices of $G'-X_i$ are drawn in the outer face of $X_i$. Now we insert the removed parts of $G$ back to $G'$, by copying the corresponding parts of the embeddings $\mathcal{D}(X_i)$ defined in the beginning of the proof. This is possible since we are placing the removed parts of $X_i$ inside faces bounded by simple cycles of $X_i$.
Hence, we obtain an embedding of $G$ in which for every component $X$ of $G[A]\cup G[B]$, all vertices of $G-X$ are drawn in the outer face of $X$.
Thus, Lemma~\ref{lemma:twoClusters} applies and that concludes the proof.

\section{Strong Hanani--Tutte for c-connected clustered graphs}
\label{section_c_connected}

\def\almost{1}
\def\almostt{0}

\begin{figure}
\begin{center}
 \ifpdf \includegraphics[scale=1]{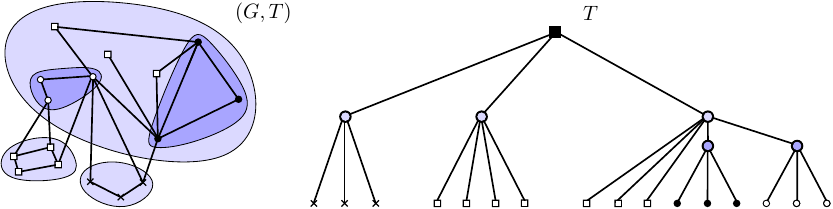} \fi
\end{center}
 \caption{A c-planar embedding of a c-connected clustered graph $(G,T)$ and the corresponding tree $T$.}
 \label{fig_5_1_c-connected1}
\end{figure}

Here we prove Theorem~\ref{thm:StrongClusterHTC}, using the ideas from the proof of Theorem~\ref{thm:StrongClusterHT}.

\if\almost\almostt
Let $(G,T)$ be an almost almost connected clustered graph with an independently even clustered drawing.
\else
Let $(G,T)$ be a c-connected clustered graph with an independently even clustered drawing. Our goal is to find a c-planar embedding of $(G,T)$; see Figure~\ref{fig_5_1_c-connected1}.
\fi
We proceed by induction on the number of clusters of $(G,T)$.
If the root cluster is the only cluster in $(G,T)$, the theorem follows directly from the strong Hanani--Tutte theorem applied to $G$. For the inductive step, we assume that $(G,T)$ has at least one non-root cluster.

A {\em minimal cluster\/} is a cluster that contains no other cluster of $(G,T)$.
\if\almost\almostt

A cluster $C$ is {\em connected\/} if $C$ induces a connected subgraph of $G$; otherwise $C$ is {\em disconnected}.
We distinguish two cases.

\paragraph{1) $(G,T)$ has a connected minimal cluster.}
Let $C$ be a connected minimal cluster of $(G,T)$.
\else
Let $V(\mu)$ be a minimal cluster of $(G,T)$.
\fi
Let $(G,T')$ be a clustered graph obtained
from $(G,T)$ by removing $\mu$ from $T$ and attaching all its children to its parent. Note that $(G,T')$ is still c-connected.
\if\almost\almostt
\jk{or almost almost connected.}
\fi

Starting from $(G,T')$, we process the connected subgraph $G[V(\mu)]$ analogously as the components of $G[A]$ in the proof of Theorem~\ref{thm:StrongClusterHT}, where we substitute $A=V(\mu)$ and $B=V(G)-V(\mu)$. By modifying $(G,T')$ we obtain a c-connected clustered graph $(G'',T'')$ with an independently even clustered drawing.
Now we apply the induction hypothesis and obtain a clustered embedding of $(G'',T'')$. Again, we modify this embedding so that all vertices of $V(G'')-V(\mu)$ are in the outer face of $G''[V(\mu)]$. Then we remove the wheels, contract the new edges and insert back the removed parts of $G[V(\mu)]$. Finally we draw a topological disc $\Delta(\mu)$ around the closure of the union of all interior faces of $G[V(\mu)]$. Since $G[V(\mu)]$ is connected, this last step is straightforward and results in a clustered embedding of $(G,T)$.

\if\almost\almostt
\jk{*** ted vlastne uz mame c-connected - uplne zadarmo! klidne napsat neco jako "Note that..."}

\paragraph{2) Every minimal cluster of $(G,T)$ is disconnected.}
Let $D$ be a minimal cluster of $(G,T)$. Since $(G,T)$ is almost almost connected, $D$ has no sibling clusters in $T$ (but it may have siblings that are vertices). Let $E$ be the parent of $D$. Let $\Delta_D$ and $\Delta_E$ be the two topological discs representing the clusters $D$ and $E$, respectively.  We process the connected components of $G[D]$ analogously to the proof of Theorem~\ref{thm:StrongClusterHT}, where we substitute $A=D$ and $B=V(G)-D$.

Next we process the connected components of $G[E\setminus D]$ as follows. By contracting $G[D]$ within the corresponding topological disc $\Delta_D$ to a single vertex $v_D$, and by contracting $G[V\setminus E]$ outside $\Delta_E$ to a single vertex $v_E$, we obtain an independently even drawing of a modified graph $G_{D,E}$. By the Hanani--Tutte theorem, $G_{D,E}$ has an embedding.

\fi

\section{Counterexample on three clusters}
\label{section_3clusters}

In this section we construct a family of even clustered drawings of flat clustered cycles on three and more clusters that are not clustered planar.
These examples imply that a straightforward generalization of the Hanani--Tutte theorem to graphs with three or more clusters is not possible.

Before giving the construction, we prove that there are no other ``minimal'' counterexamples to the Hanani--Tutte theorem for flat clustered cycles with three clusters, and more generally, flat clustered cycles whose clusters form a cycle structure. A reader interested only in the counterexample can immediately
proceed to Subsection~\ref{sub_counter} or directly to the study of Figure~\ref{obr_6_counter}.

Let $k\ge 3$. We say that a flat clustered graph $(G,T)$ with $k$ clusters is {\em cyclic-clustered\/} if there is a cyclic ordering of its clusters $(V_1, V_2, \dots, V_k)$ such that for $i\neq j$, $G$ has an edge between $V_i$ and $V_j$ if and only if $\lvert i-j\rvert \in \{1, k-1\}$; that is, if $V_i$ and $V_j$ are consecutive in the cyclic ordering. In this section we assume that $(G,T)$ is a cyclic-clustered graph with $k$ clusters. Clustered drawings of cyclic-clustered graphs with no edge-crossings outside the clusters have a simple structure.

\begin{observation}\label{obs_cyclic_clustered}
Let $\mathcal{D}$ be a clustered drawing of a cyclic-clustered graph $(G,T)$ with $k$ clusters on the sphere such that the edges do not cross outside the topological discs $\Delta_i$ representing the clusters $V_i$. Then we can draw disjoint simple curves  $\alpha_1, \beta_1, \alpha_2, \beta_2, \dots, \alpha_k,\beta_k$ such that both $\alpha_i$ and $\beta_i$ connect the boundaries of $\Delta_i$ and $\Delta_{i+1}$, do not intersect other discs $\Delta_j$, and the bounded region bounded by $\alpha_i$, $\beta_{i}$ and portions of the boundaries of $\Delta_i$ and $\Delta_{i+1}$ contains all portions of the edges between $V_i$ and $V_{i+1}$ that are outside of $\Delta_i$ and $\Delta_{i+1}$ (the indices are taken modulo $k$).
\end{observation}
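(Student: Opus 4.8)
The plan is to handle, for each $i$ separately, the edges between the consecutive clusters $V_i$ and $V_{i+1}$, bundling all of them into one strip. First I would record the structure of an edge outside the discs. Let $e=uv$ be an edge with $u\in V_i$ and $v\in V_{i+1}$. Since $\mathcal{D}$ is a clustered drawing, $e$ leaves $\Delta_i$ exactly once and enters $\Delta_{i+1}$ exactly once; moreover $e$ cannot meet any $\Delta_j$ with $j\notin\{i,i+1\}$, as that would force $e$ to cross $\partial\Delta_j$ at least twice. Hence the portion of $e$ lying outside all the discs is a single simple arc $\gamma_e$ joining a point of $\partial\Delta_i$ to a point of $\partial\Delta_{i+1}$, and $\gamma_e$ is disjoint from every other disc. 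Because $\mathcal{D}$ has no crossings outside the discs, the arcs $\gamma_e$ over all edges of $G$ are pairwise disjoint, and after a tiny perturbation I may assume their endpoints are pairwise distinct.

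Next, fix $i$ and let $\gamma_1,\dots,\gamma_m$ (with $m\ge 1$) be the arcs $\gamma_e$ arising from the edges between $V_i$ and $V_{i+1}$. Consider the compact connected set $Z_i=\overline{\Delta_i}\cup\overline{\Delta_{i+1}}\cup\gamma_1\cup\dots\cup\gamma_m$. The key step is the claim that, for $m\ge 2$, every connected component of $\mathbb{S}^2\setminus Z_i$ is an open disc whose boundary consists of two of the arcs $\gamma_j$ together with one sub-arc of $\partial\Delta_i$ and one sub-arc of $\partial\Delta_{i+1}$; this is a routine exercise in planar topology — for instance, contracting each of $\Delta_i,\Delta_{i+1}$ to a point turns $Z_i$ into a spherical drawing of the two-vertex multigraph with $m$ parallel edges, and one reads off its faces. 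I would pick any such component $F_i$, with bounding arcs $\gamma_a$ and $\gamma_b$, and let $S_i$ be the sphere with the open region $F_i$ and the interiors of $\Delta_i$ and $\Delta_{i+1}$ removed (when $m=1$ this degenerates and one takes $S_i$ to be a thin ribbon neighbourhood of $\gamma_1$ instead). Then $S_i$ is a closed topological disc — the sphere minus an open disc, minus two further sub-discs each meeting the running boundary in a single arc — it contains all of $\gamma_1,\dots,\gamma_m$ (they lie in $Z_i$ and outside the interiors of $\Delta_i,\Delta_{i+1}$), and $\partial S_i$ is made of $\gamma_a$, $\gamma_b$, an arc of $\partial\Delta_i$, and an arc of $\partial\Delta_{i+1}$.

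Finally, I would obtain $\alpha_i$ and $\beta_i$ by pushing the two arcs $\gamma_a,\gamma_b$ (or the two sides of the ribbon when $m=1$) a tiny distance into $F_i$, keeping their endpoints on $\partial\Delta_i$ and $\partial\Delta_{i+1}$; these become disjoint simple curves joining $\partial\Delta_i$ to $\partial\Delta_{i+1}$, each lying within $\varepsilon$ of some $\gamma_e$ and hence missing all discs other than $\Delta_i,\Delta_{i+1}$. The region bounded by $\alpha_i$, $\beta_i$ and the corresponding sub-arcs of $\partial\Delta_i,\partial\Delta_{i+1}$ is a slightly enlarged copy of $S_i$, so it still contains every arc $\gamma_e$ coming from an edge between $V_i$ and $V_{i+1}$, which is exactly the union of the portions of those edges lying outside $\Delta_i$ and $\Delta_{i+1}$. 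Since all the arcs $\gamma_e$ are pairwise disjoint with distinct endpoints, for small enough pushes the $2k$ curves $\alpha_1,\beta_1,\dots,\alpha_k,\beta_k$ are pairwise disjoint, as required. I expect the only real subtlety to be the claim about the components of $\mathbb{S}^2\setminus Z_i$: the naive attempt of running a thin tube around the type-$i$ arcs fails, because those arcs need not attach to $\partial\Delta_i$ along one contiguous sub-arc (another cluster disc can be trapped between two arcs of the same type), so the bundling region is forced to engulf other cluster discs — defining it as the complement of a single face of $Z_i$ is what makes this work uniformly, while the statement only demands that the curves, not the regions, be pairwise disjoint.
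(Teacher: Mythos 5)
Your proof is correct, and it takes a more explicit and somewhat different route than the paper's. The paper disposes of the observation in two sentences: the case of exactly one edge between each pair of consecutive clusters is declared obvious (a thin tube around each such edge is the corridor), and the general case is handled by induction on the number of inter-cluster edges, implicitly widening the corridors one edge at a time. You instead give a direct, one-shot construction: for each $i$ you analyse the face structure of $Z_i=\overline{\Delta_i}\cup\overline{\Delta_{i+1}}\cup\gamma_1\cup\dots\cup\gamma_m$, observe (via contraction of the two discs to the vertices of an $m$-fold parallel edge) that all its faces are bigons, and take the corridor to be the complement of one such face; your $m=1$ ribbon coincides with the paper's base case. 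What your argument buys is a concrete description of the corridor and, more importantly, an explicit treatment of the subtlety that the paper's sketch glosses over: the attachment points of the $V_i$--$V_{i+1}$ arcs on $\partial\Delta_i$ need not be contiguous, another cluster disc can be trapped between two arcs of the same type, and so the corridor region may be forced to engulf other discs. The statement survives because it only asks the curves $\alpha_i,\beta_i$ --- not the regions they bound --- to avoid the other discs and one another, a point that any inductive widening argument must also confront. The remaining steps (pairwise disjointness of the $2k$ pushed curves and their avoidance of the other discs, via compactness and a sufficiently small push) are routine and correctly handled.
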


\begin{proof}
The observation is obvious when there is exactly one edge between every pair of consecutive clusters. The general case follows easily by induction on the number of the inter-cluster edges.
\end{proof}

We note that if $(G,T)$ has only three clusters, then the conclusion of Observation~\ref{obs_cyclic_clustered} holds even if $(G,T)$ is not cyclic-clustered, that is, if there is a pair of clusters with no edge between them.

First we show that it is enough to consider clustered drawings in which the clusters are drawn
as cones bounded by a pair of rays emanating from the origin. We call such drawings \emph{radial}.

We call two clustered drawings of $(G,T)$ \emph{equivalent} if for every pair of independent edges $e$ and $f$, the number of their crossings has the same parity in both drawings. We call a clustered drawing \emph{weakly even} if every pair of edges between two disjoint pairs of clusters cross an even number of times. Clearly, every independently even drawing is also weakly even.

\begin{lemma}
\label{lemma_radial}
Given a weakly even clustered drawing $\mathcal{D}$ of a cyclic-clustered graph $(G,T)$,
there exists a radial clustered drawing of $(G,T)$ equivalent to $\mathcal{D}$.
\end{lemma}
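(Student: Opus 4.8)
The plan is to take the given weakly even clustered drawing $\mathcal{D}$, apply Observation~\ref{obs_cyclic_clustered} to extract the band structure, and then continuously deform the drawing so that the clusters become cones. First I would place the drawing on the sphere. By a homeomorphism of the sphere I may assume that the topological discs $\Delta_1,\ldots,\Delta_k$ representing the clusters are small round discs whose centers lie on a common great circle in the cyclic order $\Delta_1,\Delta_2,\ldots,\Delta_k$, and that this great circle passes through the ``origin'' side and the ``antipodal'' side of each disc. By the weak evenness hypothesis, every pair of edges running between two disjoint pairs of consecutive clusters cross an even number of times. However, Observation~\ref{obs_cyclic_clustered} does not apply directly to $\mathcal{D}$, since edges between a fixed pair of consecutive clusters $V_i,V_{i+1}$ may cross each other outside the discs. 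The first key step is therefore to remove all such crossings: since these crossings involve only edges of the subgraph $G[V_i\cup V_{i+1}]$ restricted to the region between $\Delta_i$ and $\Delta_{i+1}$, and since this region together with the two discs can be viewed as a once-punctured sphere, I would locally redraw these edges (moving crossings into the discs or cancelling them in pairs) to obtain an equivalent clustered drawing $\mathcal{D}'$ with no crossings between edges outside the clusters except possibly crossings between edges sharing a cluster --- wait, more carefully: I want $\mathcal{D}'$ to have \emph{no} edge crossings outside the clusters at all, which I can arrange because edges between $V_i$ and $V_{i+1}$ are the only ones entering the band between $\Delta_i$ and $\Delta_{i+1}$, they form a collection of arcs with fixed endpoint-cyclic-order on the two disc boundaries (up to modifications inside the discs), and any two such arcs with a given pair of endpoint patterns can be redrawn to cross at most once, then with an even adjustment inside a disc, zero times; here one uses that the number of crossings of two independent such edges is even (weak evenness) while adjacent ones can be fixed by local redrawing near the shared vertex's disc.

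The second key step is to invoke Observation~\ref{obs_cyclic_clustered} on $\mathcal{D}'$ to obtain the disjoint curves $\alpha_1,\beta_1,\ldots,\alpha_k,\beta_k$ bounding, together with arcs of the disc boundaries, $k$ quadrilateral regions $R_i$, where $R_i$ contains all portions of $V_i$--$V_{i+1}$ edges lying outside the discs. The union $\Delta_1\cup R_1\cup\Delta_2\cup R_2\cup\cdots\cup\Delta_k\cup R_k$ is then an annular region on the sphere, and its complement has two components $F_{\mathrm{in}}$ and $F_{\mathrm{out}}$, each a disc free of vertices and edges. The third step is the actual ``radialization'': I would fix a homeomorphism of the sphere to $\mathbb{R}^2\cup\{\infty\}$ sending $F_{\mathrm{out}}$ to a neighborhood of $\infty$ and $F_{\mathrm{in}}$ to a neighborhood of the origin, and sending the cyclic chain of discs and bands to an annulus around the origin; then I would further deform within this annulus so that each $\Delta_i$ becomes the intersection of a narrow cone (bounded by two rays from the origin) with the annulus, and each band $R_i$ becomes the intersection of the complementary cone sector with the annulus. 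Finally, I would push each $\Delta_i$ inward to the origin so it becomes a full cone: since $F_{\mathrm{in}}$ is an empty disc abutting all the discs and bands, I can radially ``spread'' the content so every disc and every band becomes a genuine cone with apex at the origin. Composing, we get a radial clustered drawing; it is equivalent to $\mathcal{D}'$ (hence to $\mathcal{D}$) because all the redrawings in steps one through three were continuous deformations or local parity-preserving modifications, so no parity of a crossing between independent edges changed.

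The main obstacle I expect is step one: turning the weakly-even hypothesis into a genuinely crossing-free (outside the clusters) drawing. Weak evenness only controls \emph{independent} pairs of inter-cluster edges and only those in \emph{disjoint} pairs of cluster-pairs; it says nothing directly about two parallel edges $uv,u'v'$ with $u,u'\in V_i$, $v,v'\in V_{i+1}$. For these I need the usual Hanani--Tutte-style local argument: two such arcs in the band have endpoints on $\partial\Delta_i$ and $\partial\Delta_{i+1}$; after possibly rerouting near the endpoints through the discs (which is free, since what happens inside a disc is irrelevant to the clustered structure and only affects crossings with edges also inside that disc, which I can absorb), any two arcs in an annulus with prescribed boundary endpoints are determined up to isotopy by a winding number, and I can normalize all winding numbers so that the arcs are pairwise non-crossing, at the cost of introducing crossings only inside the discs. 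One must check this rerouting does not create new odd crossings between independent edges outside the clusters --- it doesn't, because it is supported in the band $R_i$ plus collars of $\Delta_i,\Delta_{i+1}$, and within there the only edges present are $V_i$--$V_{i+1}$ edges, whose mutual crossing parities we are free to leave as the weak-evenness hypothesis forces (even for independent pairs) or fix by local moves (for adjacent pairs). Carefully bookkeeping which crossings are ``allowed'' to change (those involving a shared vertex, or those inside a cluster disc) versus which must be preserved (independent, outside all discs) is the delicate part of the write-up.
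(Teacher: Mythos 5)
There is a genuine gap, and it sits exactly at the step you flag as delicate. Your step one assumes that every crossing outside the cluster discs is a crossing between two edges joining the \emph{same} consecutive pair of clusters, and that each such edge already lives in the band between $\Delta_i$ and $\Delta_{i+1}$, so that an annulus/winding-number normalization applies. Neither assumption is justified by the hypotheses. In a weakly even clustered drawing an edge between $V_i$ and $V_{i+1}$ may wander arbitrarily through the complement of all $k$ discs --- a $k$-holed sphere, not an annulus --- winding around other clusters and crossing edges between other cluster pairs (weak evenness only forces those crossing counts to be \emph{even}, not zero, and says nothing at all when the two cluster pairs share a cluster). To pull such an edge into the corridor between its two clusters you must drag it across other discs $\Delta_j$, and each such pass flips the crossing parity of that edge with every edge incident to $V_j$; these are precisely the parities that ``equivalent'' requires you to preserve. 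Your proposal offers no mechanism to control or undo this damage. This is where the paper actually uses weak evenness: it performs the corridor deformation without worrying about parities, records the damage as a set of edge-cluster switches, observes that weak evenness of \emph{both} the original and the corridor drawing forces each edge to have been switched over either all or none of the clusters disjoint from it, and then cancels the ``all'' case by edge-vertex switches with the non-endpoint vertices of the edge's own two clusters (switching an edge over every vertex not incident to it is the identity on independent-pair parities). Your write-up contains no substitute for this argument.

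A secondary error compounds the gap: after correctly noting that weak evenness ``says nothing directly about two parallel edges $uv,u'v'$'' between the same pair of clusters, you later assert that their mutual crossing parities are ``as the weak-evenness hypothesis forces (even for independent pairs).'' It forces nothing there --- the two edges run between an identical, hence non-disjoint, pair of clusters --- so you cannot assume independent same-band edges cross evenly, and your plan to make them disjoint in the band while ``absorbing'' the discrepancy inside a disc would also need bookkeeping of the effect on crossings with intra-cluster edges and with other edge ends inside that disc, which you do not supply. The surrounding scaffolding (Observation~\ref{obs_cyclic_clustered}, the final homeomorphism to cones) matches the paper's easy case, but the lemma's real content is the parity-preserving reduction to that case, and that is what is missing.
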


\begin{proof}
Here we refer to the topological discs representing the clusters simply by ``clusters'', and denote them also by $V_i$.

If all the crossings in $\mathcal{D}$ are inside clusters, we can easily obtain a radial drawing of $(G,T)$ equivalent to $\mathcal{D}$ as follows. By Observation~\ref{obs_cyclic_clustered}, we can flip some edges so that the outer face intersects all the clusters. Then the complement of the union of the discs $\Delta_i$ and the curves $\alpha_i$ and $\beta_i$ from Observation~\ref{obs_cyclic_clustered} in the plane contains exactly one bounded and one unbounded component touching all the clusters. Therefore, we can continuously deform the plane and then expand the clusters to take the shape of the cones.

Suppose that there are crossings outside clusters in $\mathcal{D}$.
We show how to obtain an equivalent drawing that has all crossings inside clusters, in two phases.

In the first phase, we eliminate all crossings outside clusters as follows. We continuously deform every edge of $G$ between two consecutive clusters $V_i$ and $V_{i+1}$ (the indices are taken modulo $k$) into a narrow corridor between $V_i$ and $V_{i+1}$, keeping the interiors of $V_i$ and $V_{i+1}$ fixed except for a small neighborhood of their boundaries. See Figure~\ref{obr_6_1_weakly_even_to_radial}. Inside the corridor between $V_i$ and $V_{i+1}$, we want the portions of the edges to be noncrossing, but their order may be arbitrary. We may represent this deformation by the set $S( \mathcal{D}, \mathcal{D}')$ of edge-cluster switches (see Section~\ref{section_alg} for the definition) that were performed an odd number of times.

\begin{figure}
\begin{center}
 \ifpdf\includegraphics[scale=1]{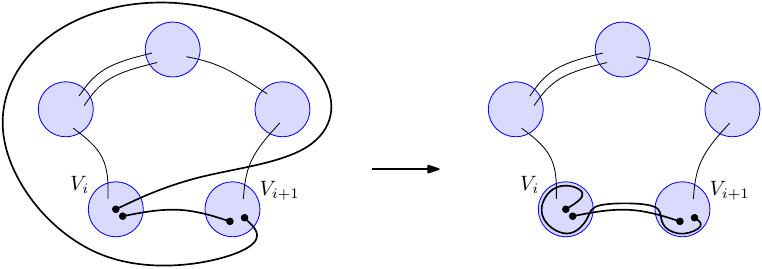}\fi
\end{center}
 \caption{Eliminating crossings outside clusters in a cyclic-clustered graph.}
 \label{obr_6_1_weakly_even_to_radial}
\end{figure}

Now we again use the fact that between every two consecutive clusters of the cyclic sequence $(V_1,\allowbreak V_2,\allowbreak \dots,\allowbreak V_k)$, there is at least one edge of $G$. Since no two edges cross outside clusters in $\mathcal{D}'$, both drawings $\mathcal{D}$ and $\mathcal{D}'$ are weakly even. Hence, if $S(\mathcal{D}, \mathcal{D}')$ contains an edge-cluster switch $(e,V_i)$ with a cluster $V_i$ that is disjoint with $e$, then $S(\mathcal{D}, \mathcal{D}')$ contains an edge-cluster switch of $e$ with every cluster disjoint with $e$. We call such an edge \emph{switched}.

In the second phase, we further transform $\mathcal{D}'$ into a drawing $\mathcal{D}''$
by deforming the edges only inside the clusters. For every switched edge $e$, we perform edge-vertex switches of $e$ with all vertices in the two clusters incident to $e$, except for the endpoints of $e$. Since performing an edge-vertex switch of $e$ with every vertex of $G$ not incident to $e$ has no effect on the parity of the number of crossings of $e$ with independent edges, the new drawing  $\mathcal{D}''$ is equivalent to $\mathcal{D}$.
\end{proof}

In the rest of this section we assume that $G$ is a cycle $C_n=v_1v_2\dots v_n$. For technical reasons, we define $v_{n+1}$ as $v_1$.
For $j\in [n]$, let $\varphi(v_j)$ denote the index of the cluster containing $v_j$, that is, $v_j\in V_{\varphi(v_j)}$.

For every edge $v_iv_{i+1}$ of $C_n$ we define $\mathrm{sign}(v_iv_{i+1}) \in \{-1, 0, 1\}$, as an element of $\mathbb{Z}$, so that $\mathrm{sign}(v_iv_{i+1}) \equiv \varphi(v_{i+1})-\varphi(v_{i}) \ (\text{mod } k)$.
Note that the sign is well defined since $(G,T)=(C_n,T)$ is cyclic-clustered and $k\ge 3$.
We then define the \emph{winding number} of $(C_n,T)$ as $\frac{1}{k}\sum_{i=1}^{n}\mathrm{sign}(v_iv_{i+1})$.
Note that in a radial clustered drawing of $(C_n,T)$ where the clusters $V_1, V_2, \dots V_n$ are drawn in a counter-clockwise order, our definition of the winding number of $(C_n,T)$ coincides with the standard winding number of the curve representing $C_n$ with respect to the origin.

We will show that if $(C_n,T)$ is a counterexample to the variant of the Hanani--Tutte theorem for flat cyclic-clustered graphs with $k$ clusters, then the winding number of $(C_n,T)$ is odd.

We say that $(C_n,T)$ is \emph{monotone} if
$\mathrm{sign}(v_1v_{2}) = \mathrm{sign}(v_2v_{3}) = \cdots = \mathrm{sign}(v_nv_{1}) \neq 0$.

In the following two lemmas we show how to reduce any even radial clustered drawing of $(C_n,T)$ to an even radial clustered drawing of a monotone cyclic-clustered cycle $(C_{n'},T')$, for some $n'\le n$, that has the same winding number as $(C_n,T)$.

We extend the notion of \emph{edge contraction} to flat clustered cycles as follows.
If $(G,T)$ is a clustered cycle and $e=uv$ is an edge of $G$ with both vertices $u,v$ in the same cluster $C$, then $(G,T)/e$ is the clustered multigraph obtained by contracting $e$ and keeping the vertex replacing $u$ and $v$ in the cluster $C$. The clustering of the rest of the vertices is left unchanged.
If $P=uwv$ is a path of length $2$ in $G$ such that $u$ and $v$ are in the same cluster $C$, then $(G,T)/P$ is the clustered multigraph obtained by contracting the edges $uw$ and $wv$ and keeping the vertex replacing $u$ and $v$ in the cluster $C$. Obviously, if $G=C_n$, then the contraction of an edge yields a cycle of length $n-1$. Similarly, the contraction of a path of length $2$ yields a cycle of length $n-2$.

\begin{lemma}
\label{lemma:contract}
Let $\mathcal{D}$ be an even radial clustered drawing of $(C_n,T)$.
Let $e$ be an edge in $C_n$ with both endpoints in the same cluster $V_i$.
Then $(C_n,T)/e$ has an even radial clustered drawing.
\end{lemma}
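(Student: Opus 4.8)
The plan is to reuse the edge-contraction trick from the proof of Theorem~\ref{thm:WeakClusterHT} and then check that it keeps the drawing radial. First I would observe that since both endpoints of $e=v_jv_{j+1}$ lie in $V_i$, that is, in the interior of the cone $\Delta_i$ representing $V_i$, and $\mathcal{D}$ is a clustered drawing (so $e$ meets every cluster boundary at most once; having both ends inside $\Delta_i$ and, by internal disjointness of the cones, outside every other $\Delta_\ell$, it must meet each $\partial\Delta_\ell$ an even number of times, hence not at all), the arc $e$ lies entirely in the interior of $\Delta_i$. In $C_n$ the vertex $v_{j+1}$ has exactly one other incident edge, $f=v_{j+1}v_{j+2}$.

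Next I would perform the contraction exactly as in the proof of Theorem~\ref{thm:WeakClusterHT}: slide $v_{j+1}$ along $e$ onto $v_j$, dragging $f$ along, so that $f$ is replaced by an edge $wv_{j+2}$, where $w$ is the new vertex (placed at the former location of $v_j$ and kept in $V_i$) and the new arc is a close parallel copy of $e$ inside $\Delta_i$ followed by the original arc of $f$; any self-crossings this creates (they can only appear near crossings of the original $f$ with $e$) are removed by local redrawings. The resulting drawing $\mathcal{D}'$ represents $(C_n,T)/e$, which is the cycle $C_{n-1}$ with $w\in V_i$ and all other clusters unchanged.

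Then I would verify the two required properties. Radiality: the cones $\Delta_1,\dots,\Delta_k$ are untouched, and $\mathcal{D}'$ is a clustered drawing because the only edge that changed, $wv_{j+2}$, crosses $\partial\Delta_i$ and $\partial\Delta_{\varphi(v_{j+2})}$ exactly once each (from its old-$f$ portion; the parallel-$e$ portion stays inside $\Delta_i$) and meets no other cluster boundary, while $v_{j-1}w$ is literally the old arc of the edge $v_{j-1}v_j$. Evenness: the only edge whose drawing changed is $f':=wv_{j+2}$; for any edge $g$ of $C_{n-1}$ independent from $f'$, the corresponding edge of $C_n$ is independent from $e$ (the only edges of $C_n$ meeting $\{v_j,v_{j+1}\}$ are $v_{j-1}v_j$, $e$, $f$, and $g$ is none of these), so $e$ crosses $g$ an even number of times and the original $f$ crosses $g$ an even number of times in $\mathcal{D}$; hence the concatenated arc $f'$ crosses $g$ an even number of times, and removing self-crossings of $f'$ changes its crossing counts with other edges only by even amounts. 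Crossings of pairs not involving $f'$ are unchanged, so $\mathcal{D}'$ is even, completing the proof.

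The main obstacle I anticipate is the bookkeeping in the evenness step: making precise that ``independent from $f'$ in $C_{n-1}$'' translates to ``independent from $e$ in $C_n$'', so that the parallel copy of $e$ contributes an even number of crossings to each relevant edge precisely because $e$ is even and those edges are independent from $e$. The radiality check and the self-crossing cleanup are routine.
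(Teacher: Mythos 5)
Your proposal is correct and follows essentially the same route as the paper, whose proof is just the two-sentence observation that $e$ lies inside the disc of $V_i$, so it can be contracted towards an endpoint dragging the adjacent edge along, with evenness of $e$ guaranteeing that crossing parities are preserved. The only point worth adding to your verification is that an \emph{even} drawing requires even crossing counts for adjacent pairs as well (not only independent ones), but your argument extends verbatim to the pairs involving $wv_{j+2}$ and its neighbours, since the hypothesis is full evenness of $\mathcal{D}$ rather than mere independent evenness.
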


\begin{proof}
Since the edge $e$ is completely contained inside the disc representing the cluster $V_i$,
we can contract the curve representing $e$ in $\mathcal{D}$ towards one of its endpoints, dragging the edges incident to the other endpoint along. Since $e$ was even, this does not change the parity of the number of crossings between the edges of $G$.
\end{proof}

\begin{figure}
\begin{center}
 \ifpdf\includegraphics[scale=1]{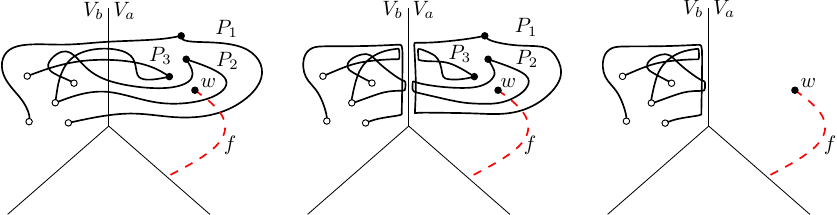} \fi
\end{center}
\caption{Illustration for the proof of Lemma~\ref{lemma:surgery}.
From left to right:
the successive stages of the redrawing operation eliminating paths $P_1, P_2$ and $P_3$. The edge $f$ cannot be present in the drawing, since it would violate its evenness.}
\label{fig_6_2_surgery}
\end{figure}

\begin{lemma}
\label{lemma:surgery}
Let $\mathcal{D}$ be an even radial clustered drawing of $(C_n,T)$. Let $V_a$ and $V_b$ be two adjacent clusters.
Let $P_1,\dots, P_m$ be all the paths of length $2$ in $C_n$ \\
whose middle vertices belong to $V_{a}$ and whose end vertices belong to $V_b$.
Then $(C_{n'},T')=(\dots((C_n,T)/P_1)/ \dots )/P_m$  has an even radial clustered drawing.
\end{lemma}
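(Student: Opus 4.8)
The plan is to eliminate the paths $P_1,\dots,P_m$ one at a time, at each step producing an even radial clustered drawing of the contracted cycle. So it suffices to treat a single such path $P=uwv$ with $w\in V_a$ and $u,v\in V_b$, where $V_a,V_b$ are adjacent clusters, and to show that $(C_n,T)/P$ has an even radial clustered drawing; the general statement then follows by induction on $m$ (after contracting $P$, the remaining paths $P_2,\dots,P_m$ are still paths of length $2$ with middle vertex in $V_a$ and endpoints in $V_b$, so the hypothesis of the lemma is maintained for the new cycle). The contracted cycle $(C_{n-2},T')$ has a new vertex, call it $z$, placed in $V_b$ in the position of $u$ and $v$; its two incident edges in $C_{n-2}$ are the edges that were originally the other edge at $u$ and the other edge at $v$ in $C_n$.

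First I would look at the picture around $w$: in the radial drawing the edge $uw$ enters the cone $V_a$ through one of its two bounding rays and the edge $wv$ leaves through a bounding ray (or the same one). Since the drawing is even, $uw$ and $wv$ cross every other edge an even number of times; in particular they cross each other an even number of times, and one checks that $u$ and $v$ must lie on the same side of (i.e.\ in the same region cut off by) the curve $uw\cup wv$ inside $V_b$ — this is exactly the content of the remark in Figure~\ref{fig_6_2_surgery} that the edge $f$ (an edge that would separate $u$ from $v$) cannot be present without violating evenness. Concretely: the closed curve $\gamma$ obtained by concatenating $uw$, $wv$ together with a short arc inside $V_b$ joining $v$ back to $u$ bounds a region $R$; because every edge of the cycle $C_n$ is even, each edge of $C_n\setminus\{uw,wv\}$ crosses $\gamma$ an even number of times, so both endpoints of each such edge lie on the same side of $\gamma$. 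Hence all of $C_n$ except $uw,wv$ lies on one side, say outside $R$; in particular the two ``other'' edges at $u$ and at $v$ both emanate into the region outside $R$.

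Now I perform the surgery: delete $uw$ and $wv$, and reconnect the two loose ends (the other edge at $u$ and the other edge at $v$) at a single new point $z$ placed inside $V_b$ near $u$ and $v$, routing the tiny reconnecting arc along the arc of $\gamma$ we used above. Because both loose ends point to the same side of $\gamma$, this reconnection can be done inside $V_b$ without any new crossings, and the curve traced by the rest of $C_n$ is unchanged outside a small neighborhood; so the new drawing is still radial and clustered. It remains to check evenness of the new drawing. Every edge not touched by the surgery has the same parity of crossings with every other untouched edge as before. For the two new (merged) edges at $z$: each is obtained from an old even edge of $C_n$ by extending it along part of $\gamma$; the added portion runs parallel to pieces of the old even edges $uw,wv$, so it picks up, with each other edge $g$, an even number of crossings (namely the same parity as $uw$ or $wv$ picked up with $g$, which was $0$), and hence the merged edge is still even. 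The two new edges at $z$ cross each other evenly for the same reason. Finally the winding number is preserved: $\mathrm{sign}(uw)+\mathrm{sign}(wv)=(\varphi(w)-\varphi(u))+(\varphi(v)-\varphi(w))=\varphi(v)-\varphi(u)=\mathrm{sign}(\text{new edge})\pmod k$, consistent with the bookkeeping already set up before the lemma.

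The main obstacle is the geometric claim that the surgery introduces no new crossings, i.e.\ that one really can merge $u$ and $v$ inside $V_b$ along a crossing-free arc. This needs the evenness hypothesis in an essential way — without it the path $uwv$ could ``enclose'' other vertices and the merge would be forced to cross edges. The careful version of the argument is the parity-of-intersection-with-$\gamma$ computation sketched above, together with noting that among the edges meeting $\gamma$ we must also account for the inter-cluster edges between $V_a$ and $V_b$ other than $uw,wv$; but each of these is also an even edge of $C_n$ (every edge of a clustered cycle lies on $C_n$), so the same parity argument applies, and no such edge can separate $u$ from $v$ either. Once this is in place, the evenness and radiality of the resulting drawing are routine, and the induction on $m$ finishes the proof.
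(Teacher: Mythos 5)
Your surgery is routed differently from the paper's and, as written, has a gap at its central step: the claim that the closed curve $\gamma=uw\cup wv\cup\alpha$ (where $\alpha$ is your ``short arc inside $V_b$ joining $v$ back to $u$'') crosses every other edge of $C_n$ an even number of times. Evenness of the drawing controls the crossings of an edge $g$ with $uw$ and with $wv$, but it says nothing about the crossings of $g$ with $\alpha$; the vertices $u$ and $v$ need not be near each other, and an arbitrary arc joining them inside the cone $V_b$ can cross other edges oddly. Producing a connecting arc with the required parity property is essentially the content of the lemma, so it cannot be assumed. The paper avoids drawing a direct arc between $u$ and $v$ altogether: it cuts $uw$ and $wv$ where they cross the ray $r$ separating $V_a$ from $V_b$ and reconnects the four severed ends by two tiny arcs hugging $r$, one on each side of $r$. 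The retained component (a curve from $u$ to $v$ staying in $V_b$) then inherits its crossing parities from the $V_b$-portions of $uw$ and $wv$, and the only thing left to verify is that the discarded closed component (the one containing $w$, lying inside the cone $V_a$) crosses every retained edge evenly.

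That verification is where your one-path-at-a-time induction genuinely fails, not just in exposition. The discarded closed curve coming from $P_1$ can enclose the middle vertex $w_j$ of another path $P_j$; this is consistent with the drawing being even, since $u_jw_j$ may cross $u_1w_1$ once inside $V_a$ and once inside $V_b$. Then $u_jw_j$ and $w_jv_j$ each cross the discarded closed curve oddly, and deleting it destroys evenness. The paper rules out an odd crossing of the discarded curve with a retained edge $f$ by showing that $f$ would have to be incident to a vertex $w\in V_a$ both of whose neighbours lie in $V_b$ --- that is, $f$ would be an edge of some $P_j$ --- and this yields a contradiction only because \emph{all} of $P_1,\dots,P_m$ are surgered simultaneously, so no such $f$ survives as a retained edge. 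You need either to perform the surgery on all $m$ paths at once, as the paper does, or to supply a genuinely new argument that a single path admits a connecting arc crossing every retained edge evenly; the latter is not routine. (Two minor points: the paper first invokes Lemma~\ref{lemma:contract} to ensure no edge of $C_n$ has both endpoints in one cluster, which your argument also implicitly needs; and the concluding computation of the winding number is not part of what this lemma asserts.)
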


\begin{proof}
Refer to Figure~\ref{fig_6_2_surgery}.
By Lemma~\ref{lemma:contract}, we assume that no edge of $C_n$ has both vertices in the same cluster. At the end we can recover the contracted edges by decontractions.

The proof proceeds by the following surgery performed on $\mathcal{D}$. First we cut the paths $P_i$ at the ray $r$ separating the clusters $V_a$ and $V_b$, by removing a small neighborhood of the curves near $r$. Second, we reconnect the severed ends of every $P_i$ on both sides of $r$, by new curves drawn close to $r$. This operation splits every $P_i$ into two components. One of the components is a curve connecting the former end vertices of $P_i$, the other component is a closed curve containing the middle vertex of $P_i$. By removing the middle vertex of $P_i$, we replace each $P_i$ by a single edge $e_i$, still represented as the union of both components of $P_i$. Third, we remove the
closed curve of every $e_i$. Finally, we contract the remaining component of each $e_i$ towards one of the end vertices, as in Lemma~\ref{lemma:contract}.

We claim that the resulting drawing is even. It is easy to see that during the first and the second phase, the parity of the number of crossings between each pair of edges was preserved, if we consider the edge $e_i$ instead of each path $P_i$, and count the crossings on all components of every edge together. Now we show that the closed component of each $e_i$ crosses every other edge an even number of times. This is clearly true for every edge $e_j$ other than $e_i$, since only the closed component of $e_j$ can cross the closed component of $e_i$. Suppose that the closed component of $e_i$ crosses some other edge $f$ an odd number of times. Then $f$ intersects the region containing $V_a$, and so $f$ has one endpoint, $w$, in $V_a$. Since the other endpoint of $f$ is not in $V_a$, the vertex $w$ lies ``inside'' the closed component of $e_i$ (in the same sense as defined in Section~\ref{section_strong}). If some of the two edges incident to $w$ had the other endpoint outside $V_b$, it would cross $e_i$, and thus $P_i$, an odd number of times.
Therefore, both edges incident to $w$ are incident to both clusters $V_a$ and $V_b$. But every such pair was replaced by a single edge during the surgery; so there is no such $f$.
\end{proof}

\begin{theorem}
\label{thm:oddwind}
Let $(C_n,T)$ be a cyclic-clustered cycle that is not c-planar but has an even clustered drawing. Then the winding number of $(C_n,T)$ is odd and different from $1$ and $-1$.
\end{theorem}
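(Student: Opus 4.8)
The plan is to combine the three reduction lemmas above with a parity argument. Since an even drawing is in particular weakly even, Lemma~\ref{lemma_radial} first lets me replace the given drawing by an even \emph{radial} clustered drawing of $(C_n,T)$. Then I would apply Lemmas~\ref{lemma:contract} and~\ref{lemma:surgery} repeatedly: as long as some edge of $C_n$ has both endpoints in one cluster, contract it by Lemma~\ref{lemma:contract}; otherwise, if the cycle is not monotone, it has a vertex at which the sign changes, its two neighbours then lie in a common cluster, and Lemma~\ref{lemma:surgery} contracts all length-$2$ paths of that type at once. Every such step strictly decreases the number of vertices, preserves the existence of an even radial clustered drawing, and deletes a set of edges whose signs sum to $0$, so the winding number is unchanged. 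Each step is moreover reversible while preserving c-planarity — to undo a contraction one only splits a vertex, and for Lemma~\ref{lemma:surgery} also re-inserts a two-edge path, inside the (adjacent) clusters involved — so c-planarity of the reduced instance is equivalent to that of the original. The procedure terminates, ending either with the empty cycle or with a monotone cyclic-clustered cycle; in the first case $(C_n,T)$ would be c-planar, contrary to hypothesis, so it ends with a monotone cyclic-clustered cycle $(C_{n'},T')$ with $n'=k\lvert w\rvert$ (where $w$ is the winding number), which is still not c-planar, still carries an even radial clustered drawing, and in which each cluster holds exactly $\lvert w\rvert$ of its vertices.

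Next I would rule out $\lvert w\rvert=1$. In that case $(C_{n'},T')$ is simply the $k$-cycle with one vertex in each of the $k$ clusters, which is trivially c-planar (draw the cycle and put a tiny disc around each vertex); this contradicts non-c-planarity. Hence $\lvert w\rvert\ge 2$, i.e.\ the winding number of $(C_n,T)$ differs from $1$ and $-1$.

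It remains to show $\lvert w\rvert$ is odd. Fix the even radial clustered drawing of the monotone cycle and single out one corridor, say between the last cluster and the first. Monotonicity forces this corridor to carry exactly $\lvert w\rvert$ edges $f_0,\dots,f_{\lvert w\rvert-1}$, one per ``lap'', pairwise independent since $k\ge 3$; and the remainder of the cycle decomposes into $\lvert w\rvert$ pairwise disjoint arcs, the $m$-th of which runs from an endpoint of $f_m$ once around through all the other clusters to an endpoint of $f_{m+1}$ (indices mod $\lvert w\rvert$). Reading off how the cycle threads these $\lvert w\rvert$ strands through the clusters and corridors, one full turn around the cycle realizes a permutation of the $\lvert w\rvert$ strands which is a single $\lvert w\rvert$-cycle — this is a restatement of the fact that the winding number equals $\lvert w\rvert$. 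On the other hand, all pairs of (sub)edges that meet during this traversal across different laps are independent, hence cross an even number of times by hypothesis; summing these crossings shows the permutation is realized with an even number of crossings among the strands, so it is an \emph{even} permutation. An $\lvert w\rvert$-cycle is an even permutation precisely when $\lvert w\rvert$ is odd, and the theorem follows.

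The step I expect to be the main obstacle is the last paragraph: making rigorous that evenness of the drawing forces the ``once-around'' permutation of the $\lvert w\rvert$ strands to be an even permutation, and that this permutation really is a single $\lvert w\rvert$-cycle. The delicate point is topological: after the clusters are shrunk to pairwise disjoint discs, their complement on the sphere is a $k$-holed sphere rather than a disjoint union of corridors, so both the edges and the cycle may wind around the holes. One therefore has to bookkeep carefully how each passage through a cluster and each passage through a corridor contributes to the permutation and to the parity of the crossing count; passing to a radial drawing via Lemma~\ref{lemma_radial} is precisely what keeps this analysis tractable, but it still has to be carried out with care. (A secondary, routine point that also needs to be spelled out is the reversibility of the contractions used in the first paragraph.)
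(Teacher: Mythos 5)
Your proposal is correct in outline and follows the paper through the entire reduction phase (Lemmas~\ref{lemma_radial}, \ref{lemma:contract} and~\ref{lemma:surgery}, applied exactly as you describe, including the observation that a non-monotone sign sequence always exhibits a path to which Lemma~\ref{lemma:surgery} applies), but your final parity step is genuinely different from the paper's. For the exclusion of winding number $\pm 1$ the paper simply invokes the theorem of Cortese et al.\ that a cyclic-clustered cycle is c-planar iff its winding number is $-1$, $0$ or $1$ (which makes that part of the statement immediate, without any reduction); your direct argument via the reversibility of the contractions works too, at the cost of the routine decontraction check you flag. For the oddness, the paper does \emph{not} run the global strand-permutation computation: it defines, for each cluster $V_i$, a binary relation $<_i$ on $V_i$ ($u<_i v$ iff $v$ lies outside a closed curve $\gamma(u)$ built from the two half-edges at $u$ inside the cone and two radial segments), proves antisymmetry and invariance of the relation under the shift $u\mapsto u_+$ along the cycle, and derives a contradiction for even $n/k$ from the single antipodal pair $v_n, v_{n/2}$. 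This is in effect a localized, two-element version of your permutation argument that avoids all global bookkeeping. Your version is completable, but the step you defer does contain the one real subtlety: the corridor edges $f_m$ are each split between two strands, so a pair of distinct corridor edges can cross an odd number of times inside $V_1$ and an odd number of times inside $V_k$, contributing odd counts to two \emph{different} strand pairs; individual strand pairs therefore may (and, at inversions, must) cross oddly, and only the sum over all strand pairs is forced to be even. One must check edge by edge that every crossing between pieces lying in distinct strands belongs to an independent pair whose total crossing number is even and is entirely accounted for within a controlled set of strand pairs (adjacent pairs and the two halves of a single $f_m$ contribute nothing across strands because of where those pieces live). With that bookkeeping done, total inter-strand crossings $\equiv$ number of inversions of $\pi$ gives that the single $\lvert w\rvert$-cycle $\pi$ is even, hence $\lvert w\rvert$ is odd, as you say. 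So: same reductions, a valid but different and somewhat heavier closing argument; the paper's relation $<_i$ is the device that lets it skip exactly the part you identify as delicate.
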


\begin{proof}
Let $k\ge 3$ be the number of clusters of $(C_n,T)$. By Lemmas~\ref{lemma_radial}, \ref{lemma:contract} and~\ref{lemma:surgery}, we may assume that $(C_n,T)$ is monotone and that it has an even radial clustered drawing. In particular, the absolute value of the winding number of $(C_n,T)$ is equal to $n/k$.
Cortese et al.~\cite{CDibPP05_cycles} proved that a cyclic-clustered cycle is c-planar if and only if its winding number is $-1$, $0$ or $1$. This implies that $n\ge 2k$ if $(C_n,T)$ is not c-planar.

For every $i\in [k]$, we define a relation $<_i$ on $V_i$ as follows.
Refer to Figure~\ref{fig_6_3_oddwind}.
Let $u\in V_i$, and let $uu_{-}$ and $uu_{+}$ denote the two edges incident to $u$ so that $u_{-}\in V_{i-1}$ and $u_{+}\in V_{i+1}$ (the indices are taken modulo $k$). Let $(uu_{-})_i$ and $(uu_{+})_i$ denote the parts of $uu_{-}$ and $uu_{+}$, respectively,
contained inside the cone representing $V_i$. Let $r(uu_{-})$ and $r(uu_{+})$ denote the endpoint of $(uu_{-})_i$ and $(uu_{+})_i$, respectively, different from $u$. That is, $r(uu_{-})$ and $r(uu_{+})$ are on the boundary of the cone representing $V_i$.
 Let $\gamma(u)$ denote the closed curve obtained by concatenating $(uu_{-})_i$, $(uu_{+})_i$,
and the two line segments connecting $r(uu_{-})$ and $r(uu_{+})$, respectively, with the origin.
We say that vertices $u,v\in V_i$ are in the relation $u<_i v$ if $v$ is ``outside'' (in the same sense as defined in Section~\ref{section_strong})
of the curve $\gamma(u)$.

\begin{figure}
\begin{center}
 \ifpdf \includegraphics[scale=1]{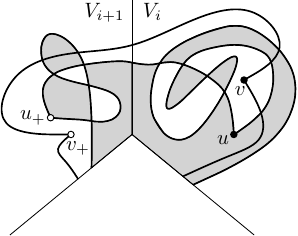} \fi
\end{center}
\caption{Illustration for the proof of Theorem~\ref{thm:oddwind}. The two pairs of vertices $u, v$, and $u_{+},v_{+}$ are in clusters $V_{i}$ and $V_{i+1}$, respectively.
The ``inside'' of the curves $\gamma(u)$ and $\gamma(u_{+})$ consists of the shaded regions. Thus, we have $u<_{i} v$ and $u_{+}<_{i+1} v_{+}$.}
\label{fig_6_3_oddwind}
\end{figure}

Let $v_{+}$ be the neighbor of $v$ in $V_{i+1}$, and let $v_{-}$ be the other neighbor of $v$.
The relations $<_i$ and $<_{i+1}$ satisfy the following properties.
\begin{enumerate}[(1)]
\item
  the relation $<_i$ is antisymmetric, that is, $(u<_i v) \Rightarrow \neg (v<_i u)$,
\item
  $u<_i v$ if and only if $u_{+}<_{i+1} v_{+}$.
\end{enumerate}

For part (1), we observe that $(vv_{-})_i$ and $(uu_{+})_i$ cross an even number of times. Suppose that $u<_i v$. Then $(vv_{-})_i$ and $(uu_{-})_i$ cross an odd number of times if and only if $r(vv_{-})$ is on $\gamma(u)$; equivalently, $r(vv_{-})$ is closer to the origin than $r(uu_{-})$. If also $v<_i u$, then $(vv_{-})_i$ and $(uu_{-})_i$ cross an odd number of times if and only if $r(uu_{-})$ is closer to the origin than $r(vv_{-})$; a contradiction.

For part (2), let $u_{++}$ be the neighbor of $u_{+}$ other than $u$. The claim follows from the fact that $vv_{+}$ crosses each of the curves $(uu_{-})_i, uu_{+}$ and $(u_{+}u_{++})_{i+1}$ evenly.

Recall that $C_n=v_1v_2\ldots v_n$. Let $i=\varphi(v_n)$ and suppose without loss of generality that $i+1=\varphi(v_1)$. Suppose that $n/k$ is even. Then both $v_n$ and $v_{n/2}$ are in $V_i$. By (2), we have $v_n<_i v_{n/2} \ \Leftrightarrow \ v_1<_{i+1} v_{n/2+1} \ \Leftrightarrow \ \cdots \ \Leftrightarrow \ v_{n/2}<_i v_n$, but this contradicts (1). Therefore, $n/k$ is odd.
\end{proof}

\paragraph{Remark.} 
We will see next that the relations $<_i$ are not necessarily transitive.
In fact, it is not hard to see that in every counterexample to the variant of the Hanani--Tutte theorem for cyclic-clustered cycles, no relation $<_i$ is transitive.

\subsection{Proof of Theorem~\ref{thm:counter}}
\label{sub_counter}



For simplicity of the description, we draw the graph on a cylinder, represented by a rectangle with the left and right side identified.

Let $r\ge 3$ be an odd integer and let $k\ge 3$. Our counterexample is a drawing of a monotone cyclic-clustered cycle with $kr$ vertices and $k$ clusters.
The corresponding curve consists of $kr+1$ periods of an appropriately scaled graph of the sinus function winding $r$ times around the cylinder, where the vertices mark the beginning of $kr$ of the periods.
We can describe the curve representing the cycle analytically as a height function $f(\alpha)=\sin\left( \frac {kr+1}{r}\alpha\right)$ on a vertical cylinder (whose axis is the $z$-axis) taking
the angle as the parameter. The vertices of the cycle are at points $\left(i\frac{2r}{kr+1}\pi, 0\right)$, where $i=0,\ldots, kr-1$, and
the clusters are separated by vertical lines at angles $\frac{2ri+1}{kr+1}\pi$, for $i=0,\ldots k-1$; see Figure~\ref{obr_6_counter}.
By the result of Cortese et al.~\cite{CDibPP05_cycles}, the cyclic-clustered cycle is not c-planar when $r>1$.

\begin{figure}
\begin{center}
 \ifpdf\includegraphics[scale=1]{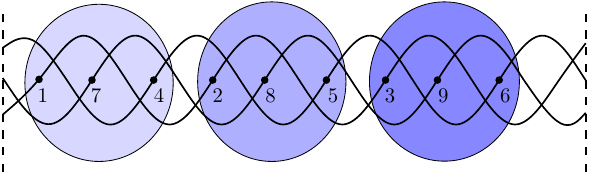}\fi
\end{center}
 \caption{A counterexample to the variant of the Hanani--Tutte theorem with parameters $k=3$ and $r=3$; the underlying graph is thus a cycle on $9$ vertices. The vertices are labeled by positive integers in the order of their appearance along the cycle.}
 \label{obr_6_counter}
\end{figure}

\section{Small faces}

\label{section_faces}

In this section we reprove a result of Di Battista and Frati~\cite{DiBF09_small_faces} that c-planarity can be decided in polynomial time for embedded flat clustered graphs whose every face is incident to at most five vertices.
In our proof, we reduce the problem to computing the largest size of a common independent set of two matroids. This can be done in polynomial time by the matroid intersection theorem~\cite{Edmonds01_submodular,Lawler75_matriod}.
See e.g.~\cite{Oxley11_book} for further references.

In this section,
we will use a shorthand notation $(G,T)$ instead of $(\mathcal{D}(G),T)$ for an embedded clustered graph.
Let $(G,T)$ be an embedded flat clustered graph where $G=(V,E)$.

Since contracting an edge with both endpoints in the same cluster does not affect $c$-planarity, we will assume that $(G,T)$ is an embedded clustered multigraph where every cluster induces an independent set. If $(G,T)$ is $c$-planar and contains a loop at $v$, then the whole interior of the loop must belong to the same cluster as $v$. Hence, either there is a vertex of another cluster inside the loop, in which case $(G,T)$ is not c-planar, or we may remove the loop and everything from its interior without affecting the c-planarity. The test and the transformation can be easily done in polynomial time. We will thus also assume that $(G,T)$ has no loops.

A \emph{saturator} of $(G,T)$ is a subset $S$ of ${V \choose 2}\setminus E$ such that every cluster of $(G\cup S,T)$ is connected and the edges of $S$ can be added to $(G,T)$ without crossings.

Let $S$ be a minimal saturator of $(G,T)$. Then each cluster in $(G\cup S,T)$ induces a spanning tree of the cluster, and so the boundary of each cluster can be drawn easily. We have thus the following simple fact.

\begin{observation}[{\cite{FCE95b_planarity}}]
\label{obs:matroid1}
An embedded flat clustered graph $(G,T)$ is c-planar if and only if $(G,T)$ has a saturator.
\end{observation}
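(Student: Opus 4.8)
The plan is to prove Observation~\ref{obs:matroid1} as a direct unwinding of the definitions, using the preceding normalizations (no loops, every cluster an independent set). I would structure the argument as two implications.

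\textbf{From a c-planar embedding to a saturator.} Suppose $(G,T)$ is c-planar, so $\mathcal{D}(G)$ extends to a clustered embedding by a choice of topological disc $\Delta(\nu)$ for each cluster $\nu$. Fix a cluster $\nu$ with vertex set $V(\nu)$. The disc $\Delta(\nu)$ contains all vertices of $V(\nu)$ and no other vertex, and every edge of $G$ crosses $\partial\Delta(\nu)$ at most once; in particular the edges of $G$ with both endpoints in $V(\nu)$ stay inside $\Delta(\nu)$, but since $\nu$ induces an independent set there are none. Inside $\Delta(\nu)$, the drawing of $G$ meets $\Delta(\nu)$ only in the vertices $V(\nu)$ together with finitely many pendant arcs (the ``stubs'' of edges leaving $\Delta(\nu)$). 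The disc $\Delta(\nu)$ minus these arcs is still connected, so I can connect the $|V(\nu)|$ vertices by a tree of new arcs drawn inside $\Delta(\nu)$, avoiding the stubs and each other. Doing this for every cluster simultaneously — the discs of incomparable clusters are internally disjoint, and for nested clusters one works from the innermost outward — produces a set $S\subseteq\binom{V}{2}\setminus E$ of non-edges drawn without crossings (and not crossing $G$), such that each cluster of $(G\cup S,T)$ is connected. Minor point: one must argue $S$ consists of \emph{pairs} of vertices, i.e.\ that the spanning trees can be taken simple and without repeated pairs, which is immediate since a tree on a vertex set has no repeated edges; if an added arc happened to duplicate an edge of $E$ we could reroute or simply note $(G,T)$ already has that cluster partially connected and use fewer arcs. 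Hence $S$ is a saturator.

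\textbf{From a saturator to a c-planar embedding.} Conversely, let $S$ be a saturator, and by discarding arcs take it minimal, so that in $G\cup S$ the edges of $S$ inside each cluster $V(\nu)$ form a spanning tree $T_\nu$ of that cluster (if they contained a cycle or a redundant edge we could remove one, contradicting minimality; and they must span, since the cluster is connected in $G\cup S$ but $\nu$ is independent in $G$, so every vertex of $V(\nu)$ is reached only via $S$-edges). Now I take the planar embedding of $G\cup S$ witnessing that the arcs of $S$ were added without crossings, restrict attention to the subdrawing, and for each cluster let $\Delta(\nu)$ be a sufficiently small closed neighborhood of the tree $T_\nu$ together with its vertices. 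Such a neighborhood is a topological disc containing exactly the vertices of $V(\nu)$ and meeting every other edge of $G$ at most once near each vertex of $V(\nu)$ — here one uses that $T_\nu$ is a tree (a regular neighborhood of a tree in the plane is a disc) and that distinct clusters have disjoint trees, drawn without crossings, so their neighborhoods can be taken disjoint (and, for nested clusters, the outer neighborhood can be taken to contain the inner one by first expanding the innermost). Conditions (i)--(iv) of the definition of c-planar embedding are then verified by inspection: (i) the disc contains $V(\nu)$ and no other vertex; (ii)--(iii) by the nested/disjoint structure of the chosen neighborhoods; (iv) each edge of $G$ enters $\Delta(\nu)$ only through the endpoint(s) it has in $V(\nu)$, crossing the boundary at most once, because a thin enough neighborhood of $T_\nu$ is crossed by an edge at most in the vicinity of a single endpoint.

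\textbf{Main obstacle.} The genuinely careful point is condition (iv) and the simultaneous handling of all clusters: one must make sure that thickening the spanning trees into discs does not create a disc whose boundary is crossed twice by some edge, and that the thickenings for nested clusters can be arranged with $\Delta(\mu)\subseteq\Delta(\nu)$. Both are handled by the standard ``regular neighborhood of a forest in the plane'' argument together with choosing neighborhoods from the innermost cluster outward, so the proof is short; I would state these two implications and leave the neighborhood construction at the level of detail above, citing Figure~\ref{fig_3_lemma_bipartite}-style reasoning as already used in the proof of Lemma~\ref{lemma:WeakBipartiteClusterHT}.
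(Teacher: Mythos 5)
Your proof is correct and follows essentially the same route as the paper, which only sketches this observation (citing Feng, Cohen and Eades, and noting that a minimal saturator gives a spanning tree per cluster whose neighborhood yields the disc); your reverse direction is exactly that regular-neighborhood argument, and your forward direction is the per-face connection argument already used for $\Delta_A$ in the proof of Lemma~\ref{lemma:WeakBipartiteClusterHT}. One small imprecision: the claim that ``$\Delta(\nu)$ minus the stubs is still connected'' is false in general (a single vertex whose stubs reach the boundary on two sides already separates the disc), but the conclusion you need --- that the vertices of $V(\nu)$ can be joined by a non-crossing tree inside $\Delta(\nu)$ --- still holds, since consecutive faces of the drawing restricted to $\Delta(\nu)$ share the cluster vertex at the end of their common stub, so the per-face stars of Lemma~\ref{lemma:WeakBipartiteClusterHT} connect all of $V(\nu)$.
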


In order to model our problem by matroids we need to avoid two noncrossing saturating edges in one face coming from
two different clusters, which might happen if the boundary of the face is not a
simple cycle.
To this end, we modify the multigraph further by a sequential merging of some pairs of vertices. Assuming that $u$ and $v$ are non-adjacent vertices incident to a common face $f$, \emph{merging} of $u$ and $v$ in $f$ consists in embedding a new edge $uv$ inside $f$ and then contracting it.

\begin{lemma}
\label{lemma:matroid}
Let $(G,T)$ be an embedded flat clustered multigraph all of whose faces are incident to at most five vertices. Suppose that $G$ has no loops and that every cluster of $(G,T)$ induces an independent set. Then there is an embedded flat clustered multigraph $(G',T)$ obtained from $(G,T)$ by merging vertices
such that
\begin{enumerate}[1)]
\item $(G,T)$ is c-planar if and only if $(G',T)$ is c-planar, and
\item if $(G',T)$ is c-planar then $(G',T)$ has a saturator $S$ whose edges can be embedded so that each face of $G'$ contains at most one edge of $S$.
\end{enumerate}
Moreover, finding $G'$ and verifying conditions 1) and 2) can be performed in linear time.
\end{lemma}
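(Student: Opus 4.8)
The plan is to define a merging procedure triggered by the ``bad'' configurations — faces whose boundary is not a simple cycle — and to show that it both preserves c-planarity and, after termination, forces any minimal saturator to place at most one edge per face. First I would observe that a face $f$ incident to at most five vertices can fail to be bounded by a simple cycle only in a bounded number of combinatorial types: some vertex appears twice on the facial walk, or the walk uses a cut edge, or a cut vertex separates the walk; since $|V(f)|\le 5$ the number of types is small and can be enumerated explicitly. In each such type there is a vertex $v$ that appears (at least) twice on the facial walk of $f$, splitting the rest of the incident vertices of $f$ into two nonempty ``sides''. The key structural claim is: in a c-planar embedding of $(G,T)$, any saturating edge embedded inside $f$ on one side can be routed equally well after merging $v$ with a vertex on the opposite side, because the two copies of $v$ on the walk of $f$ are topologically interchangeable as anchors for edges drawn inside $f$. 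I would make this precise by checking, for each enumerated type, that merging a designated pair $u,v$ (with $u$ and $v$ in the same cluster, hence legitimate for contraction without changing c-planarity) strictly decreases a potential such as $\sum_f (\text{number of times the facial walk of }f\text{ is not simple})$, while keeping all faces incident to at most five vertices — the merge identifies two vertices on a common face, so the resulting face has fewer incident vertices, and no other face gains vertices.

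Next I would argue part~1). One direction is immediate: merging is contraction of a (newly added, interior) edge inside a single cluster, and contracting a cluster-internal edge never affects c-planarity (this is already used in the paper). For the converse I need that if $(G',T)$ is c-planar then so is $(G,T)$; here I would use that a c-planar embedding of $(G',T)$ can be ``un-merged'' by locally splitting the merged vertex back into $u,v$ inside the face where the merge was performed, preserving the disc assignment, since the splitting happens in a tiny neighbourhood and the two pieces stay in the same cluster. Iterating over the finitely many merges gives the equivalence. For the ``linear time'' claim, I would note that detecting non-simple facial walks and the relevant local types is a linear scan of the rotation system / doubly-connected edge list, each merge is a local operation, and the potential is bounded by $O(|E|)$, so only $O(|E|)=O(|V|)$ merges occur in total.

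Finally, part~2): once $(G',T)$ is reached, every face has a simple-cycle boundary (potential zero), so any two distinct vertices of a face lie on its bounding cycle in a well-defined cyclic position. I would take a minimal saturator $S$ of $(G',T)$ — guaranteed by Observation~\ref{obs:matroid1} — and an embedding of its edges. For each face $f$, the edges of $S$ drawn inside $f$ are pairwise non-crossing chords of the simple cycle $\partial f$; if two of them lie in $f$, since $|V(f)|\le 5$ a short case analysis on $\{3,4,5\}$ boundary vertices shows that one of the two chords can be rerouted through an adjacent face, or that $S$ was not minimal (one chord is redundant for connectivity of its cluster given the other edges), contradicting minimality — the bound five is exactly what makes this finite check work, since with six or more boundary vertices two independent non-crossing chords can be genuinely forced. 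Hence each face carries at most one edge of $S$. The main obstacle I expect is the bookkeeping in this last step: verifying, across the handful of boundary sizes and the possible ways two disjoint clusters can demand a connecting chord in the same face, that minimality always lets us discard or relocate one chord; everything else (the merging procedure, the equivalence, the linear-time accounting) is routine once the finite casework on small faces is set up.
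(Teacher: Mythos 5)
Your plan replaces the paper's central notion with a weaker one, and this creates a genuine gap. The paper's merging procedure is driven by \emph{bad} faces, defined as faces that admit two noncrossing saturating edges (possibly from the same cluster); it is not driven by faces with non-simple boundary walks. It is true that under the hypotheses (at most five incident vertices, clusters inducing independent sets) a face bounded by a simple cycle cannot be bad, since each cluster then meets the face in at most two vertices and any two vertex-disjoint chords of a cycle of length at most five must cross. But your procedure aims at the stronger invariant ``every facial walk is simple,'' and that invariant is generally unattainable: a face may have a repeated vertex on its walk while containing no same-cluster pair at all (so there is nothing to merge), or the only candidate merges may destroy c-planarity. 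Consequently your potential need not reach zero, the premise of your part~2 (``every face has a simple-cycle boundary'') fails, and your fallback for part~2 --- reroute one of two noncrossing chords into an adjacent face or invoke minimality of $S$ --- is unsubstantiated; the correct statement is that once no face is bad, \emph{no} two saturating edges can be embedded without crossing in a single face, so any embedded saturator automatically has at most one edge per face. No rerouting or minimality argument is needed, and none is available for the faces your procedure leaves non-simple.

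The second gap is in your part~1). You justify ``merging preserves c-planarity'' by appeal to the fact that contracting a cluster-internal edge is harmless, but that fact applies to edges already present in $G$. Merging first \emph{adds} a new edge $uv$ inside the face $f$ and then contracts it, and the direction $(G,T)$ c-planar $\Rightarrow$ $(G',T)$ c-planar is exactly where the work lies: one must show that the chosen pair can be joined inside $f$ compatibly with some saturator, i.e., without cutting off a saturating edge that another cluster (or the same cluster) is forced to draw in $f$. Your ``the two copies of $v$ are topologically interchangeable anchors'' does not establish this. The paper's proof is essentially a case analysis on the saturating pairs of a bad face (two pairs from different clusters; one cluster with three incident vertices and all others with at most one; the configuration with three vertices of one cluster and two of another, analyzed via the enclaves of $\partial f$), and in one of these sub-cases the right conclusion is that $(G,T)$ is not c-planar at all and no merge should be performed --- a branch your procedure has no way to detect. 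The un-merging direction and the linear-time bookkeeping in your write-up are fine, but they are the easy parts; the selection of a c-planarity-preserving merge at each bad face is the substance of the lemma and is missing from your proposal.
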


A \emph{saturating pair} of a face $f$ is a pair of vertices incident to $f$ and belonging to the same cluster. Thus, a cluster with $k$ vertices incident to $f$ has ${k \choose 2}$ saturating pairs in $f$. A \emph{saturating edge} of $f$ is a simple curve embedded in $f$ and connecting the vertices of some saturating pair of $f$.

\begin{proof}[Proof of Lemma~\ref{lemma:matroid}]
Clearly, once we find that $(G,T)$ is not c-planar we can choose $G'=G$.

A face of $(G,T)$ is \emph{bad} if it admits two noncrossing saturating edges, even from the same cluster. If no face of $(G,T)$ is bad, then the choice $G'=G$ satisfies both conditions of the lemma.

Assume that $(G,T)$ has at least one bad face $f$. We show that at least two vertices of $f$ can be merged so that the resulting embedded clustered multigraph is c-planar if and only if $(G,T)$ is c-planar. The lemma then follows by induction on the number of vertices.

Suppose that $f$ has only two saturating pairs, $\{u,v\}$ and $\{x,y\}$. In this case, $u$ and $v$ belong to a different cluster than $x$ and $y$. Since $f$ is bad, the pairs $\{u,v\}$ and $\{x,y\}$ can be joined by saturating edges $e(u,v)$  and $e(x,y)$, respectively, embedded in $f$ without crossings. Hence,
we can merge $u$ with $v$ along $e(u,v)$ while preserving the c-planarity.

If $f$ has more than two saturating pairs, there is a cluster $C$ that has at least three vertices incident to $f$. Let $C(f)$ be the set of these vertices. If all other clusters have at most one vertex incident to $f$, all saturating pairs of $f$ have vertices in $C(f)$. In this case, we can merge any pair of vertices of $C(f)$ while preserving the c-planarity.

In the remaining case, $f$ is incident to exactly five vertices, exactly three of them, $u,v$ and $w$, are in $C$, and the remaining two, $x$ and $y$, are in another cluster $D$. In this case, $f$ has four saturating pairs: $\{u,v\}$, $\{u,w\}$, $\{v,w\}$ and $\{x,y\}$. If $x$ and $y$ are in different components of the boundary of $f$, then it is possible to embed  saturating edges for all the four saturating pairs without crossings. We may thus merge $x$ with $y$ without affecting the c-planarity. For the rest of the proof we assume that $x$ and $y$ are in the same component of the boundary of $f$. In this case, every saturating edge $e$ joining $x$ with $y$ separates the face $f$ into two components. At least one of the components is incident to at least two vertices of $C(f)$, and so at least one saturating edge of the cluster $C$ can be embedded in $f$ while avoiding crossings with $e$. If at least two saturating edges of $C$ can be embedded in $f$ while avoiding crossings with $e$, we may merge $x$ with $y$ along $e$ without affecting c-planarity. Therefore, we also assume for the rest of the proof that for every saturating edge $e$ joining $x$ with $y$ in $f$, exactly one saturating pair of $C$ can be joined by a saturating edge embedded in $f$ without crossings with $e$. This implies that for every minimal saturator of $(G,T)$, at most two saturating pairs in total can be simultaneously joined by saturating edges embedded in $f$ without crossings.

If for some of the saturating pairs of $C$ in $f$, say, $\{u,v\}$, no saturating edge embedded in $f$ joining $x$ with $y$ separates $u$ and $v$, we can merge $u$ with $v$ without affecting c-planarity. We may thus assume that every pair of vertices in $C(f)$ can be separated by some saturating edge joining $x$ with $y$.

The boundary of $f$, denoted by $\partial f$, is a bipartite cactus forest with partitions $C(f)=\{u,v,w\}$ and $D(f)=\{x,y\}$. We call every connected component of $\mathbb{R}^2 \setminus \partial f$ other than $f$ an \emph{enclave}. Each enclave is bounded by a simple cycle, of length $2$ or $4$. Suppose that each enclave is bounded by a $2$-cycle. Since each of the $2$-cycles contains only one vertex of $C$, every saturating edge joining two vertices of $C(f)$ has to be embedded in $f$, and moreover, every minimal saturator of $(G,T)$ contains exactly two of the saturating pairs $\{u,v\}$, $\{u,w\}$, $\{v,w\}$, forming a spanning tree of the triangle $uvw$. Similarly, every minimal saturator of $(G,T)$ contains the pair $\{x,y\}$, and the saturating edge joining $x$ with $y$ must be embedded in $f$. By our assumptions, two of the three saturating edges in $f$ will cross, so in this case $(G,T)$ is not c-planar.

We are left with the case when one enclave is bounded by a $4$-cycle, say, $uxvy$. Clearly, there is at most one other enclave and it is bounded by a $2$-cycle. In total, there are three possibilities for the subgraph $\partial f$; see Figure~\ref{obr_7_enclave}. Every saturator of $(G,T)$ has to contain at least one of the two saturating pairs $\{u,w\}$, $\{v,w\}$, and the corresponding saturating edge must be embedded in $f$. Moreover, if there is a saturator containing $\{x,y\}$ and $\{v,w\}$ and no other saturating pair of $f$, then replacing $\{v,w\}$ with $\{u,w\}$ we also get a saturator, since saturating edges joining the pairs $\{x,y\}$ and $\{u,w\}$ can be simultaneously embedded in $f$ without crossings. 
Therefore, we can merge $u$ and $w$ while preserving c-planarity. This finishes the proof of the lemma.
\end{proof}

\begin{figure}
\begin{center}
 \ifpdf\includegraphics[scale=1]{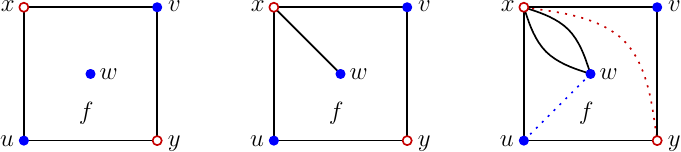}\fi
\end{center}
 \caption{Three cases of a bad face $f$ whose boundary contains a $4$-cycle. Saturating edges joining the pairs $\{x,y\}$ and $\{u,w\}$ are drawn in the third case. The vertices $u$ and $w$ can be merged without affecting the c-planarity.}
 \label{obr_7_enclave}
\end{figure}

\subsection{Proof of Theorem~\ref{thm:matroid}}

We start with the embedded multigraph $(G',T')$ obtained in Lemma~\ref{lemma:matroid}.
By Observation~\ref{obs:matroid1} and Lemma~\ref{lemma:matroid}, it is enough to decide whether $(G',T')$ has a minimal saturator.

In order to test the existence of a saturator
we define two matroids for which we will use the matroid intersection algorithm.
The ground set of each matroid is a set $\overline{E'}$ of saturating edges of $(G',T')$ defined as the disjoint union $\bigcup_f E_f$, over all faces of $G'$, where $E_f$ is a set containing one saturating edge for each saturating pair of $f$. By the proof of Lemma~\ref{lemma:matroid}, no face $f$ is bad, so every set $E_f$ has at most two saturating edges. Moreover, if $|E_f|=2$, then the two saturating edges in $E_f$ cross and belong to different clusters.

The first matroid, $M_1$, is the direct sum of graphic matroids constructed for each cluster as follows.
Denote the clusters of $(G',T')$ by $C_i$, $i=1,\ldots, k$. Let $G_i$ be the multigraph induced by $C_i$ in $\overline{G'}=(V,\overline{E'})$. If $G_i$ is not connected, $(G',T')$ has no saturator. We thus further assume that $G_i$ is connected.
The ground set of the graphic matroid $M(G_i)$ is the edge set of $G_i$. Since $G_i$ is connected, the rank of $M(G_i)$ is the number of vertices of $G_i$ minus one.
Since the matroids $M(G_i)$, $i=1,\ldots, k$, are pairwise disjoint, their direct sum, $M_1$, is also a matroid and its rank is the sum of the ranks of the matroids $M(G_i)$.

The second matroid, $M_2$, is a partition matroid defined as follows.
A subset of $\overline{E'}$ is independent in $M_2$ if it has at most one edge in every face of $G'$.

Let $M$ be the intersection of $M_1$ and $M_2$. If $M$ has an independent set of size equal to the rank of $M_1$, then $(G',T')$ has a saturator that has at most one edge
inside each face. Thus, $(G',T')$ is c-planar by Observation~\ref{obs:matroid1}, and that in turn implies by Lemma~\ref{lemma:matroid} that $(G,T)$ is c-planar as well.
On the other hand, if $(G,T)$, and hence $(G',T')$, is c-planar, then $(G',T')$ has a minimal saturator $S$ that
has at most one edge inside each face by Lemma~\ref{lemma:matroid}. Thus, $S$ witnesses the fact that $M$ has an independent set of size equal to the rank of $M_1$.
Hence, $(G',T')$ is c-planar if and only if $M$ has an independent set of size equal to the rank of $M_1$, and this can be tested by the matroid intersection algorithm.

\section{Concluding remarks}

\label{section_epilogue}
Let $G_T$ be the simple graph obtained from $(G,T)$ by contracting the clusters and deleting the loops and multiple edges.
By the construction in Section~\ref{section_3clusters} we cannot hope for a fully general variant of the Hanani--Tutte theorem for $(G,T)$ when $G_T$ contains a cycle.

A simple modification of the construction provides a counterexample also for the case when $G_T$ is a tree with at least one vertex of degree greater than 2; see Figure~\ref{obr_8_counter_tree}. This disproves our conjecture from the conference version of this paper~\cite{FKMP14_ht}.

\begin{figure}
\begin{center}
 \ifpdf\includegraphics[scale=1]{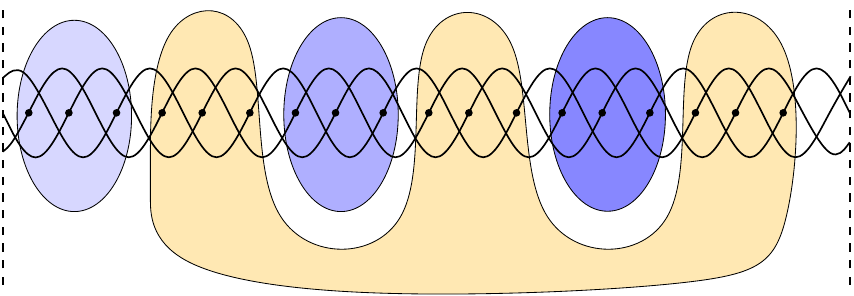}\fi
\end{center}
 \caption{A counterexample with $G_T=K_{1,3}$.}
 \label{obr_8_counter_tree}
\end{figure}

Therefore, the only open case for flat clustered graphs is the case when $G_T$ is a collection of paths. We conjecture that the strong Hanani--Tutte theorem holds in this case.

\begin{conjecture_}
\label{conj:conj}
If $G_T$ is a path and $(G,T)$ admits an independently even clustered drawing then $(G,T)$ is c-planar.
\end{conjecture_}

A variant of Conjecture~\ref{conj:conj} for non-flat two-level clustered graphs in which the clusters on the bottom level form a path and one additional cluster contains all interior clusters of the path would provide a polynomial-time algorithm for
c-planarity testing for strip clustered graphs, which is an open problem stated in~\cite{ADDF13+}.

Our proof from Section~\ref{section_faces} fails if the graph has hexagonal faces.
We wonder if this difficulty can be overcome or rather could lead to NP-hardness.

\section*{Acknowledgements}
We are grateful to the anonymous referee and to the numerous anonymous reviewers of early versions for many valuable comments.




\end{document}